\newtheorem{theorem}{Theorem}
\newtheorem{lemma}[theorem]{Lemma}
\newtheorem{definition}[theorem]{Definition}
\DeclareMathOperator*{\argmax}{arg\,max}
\title{Stochastic Multiplicative Weights Updates in Zero-Sum Games}  
\renewcommand\@date{{%
  \vspace{-\baselineskip}%
  \large\centering
  \begin{tabular}{@{}c@{}}
    James P. Bailey \\
    \normalsize jamespbailey@tamu.edu \\
    \normalsize Texas A\&M University \\ \normalsize\phantom{hi}
  \end{tabular}%
  \quad  \quad
  \begin{tabular}{@{}c@{}}
    Sai Ganesh Nagarajan \\
    \normalsize sai.nagarajan@epfl.ch \\
    \normalsize EPFL \\ \normalsize\phantom{hi}
  \end{tabular}
  \quad   \quad
  \begin{tabular}{@{}c@{}}
    Georgios Piliouras \\
    \normalsize georgios@sutd.edu.sg \\
    \normalsize Singapore University of \\ \normalsize Technology and Design
  \end{tabular}

  \bigskip

}}
\begin{document}
\maketitle

\begin{abstract}
  We study agents competing against each other in a repeated network zero-sum game while applying the multiplicative weights update (MWU) algorithm with fixed learning rates. In our implementation, agents select their strategies probabilistically in each iteration and update their weights/strategies using the realized vector payoff of all strategies
  , i.e., stochastic MWU with full information.   We show that the system results in an irreducible Markov chain where agent strategies diverge from the set of Nash equilibria.   Further, we show that agents will play pure strategies with probability 1 in the limit.   
\end{abstract}

    	\def\size{.20}
    	\def\Space{-.1in}
    	\begin{figure}[!h]\centering
    		{		
    			\subfigure[10 Iterations]{\label{fig:10}%
    				\includegraphics[scale=\size]{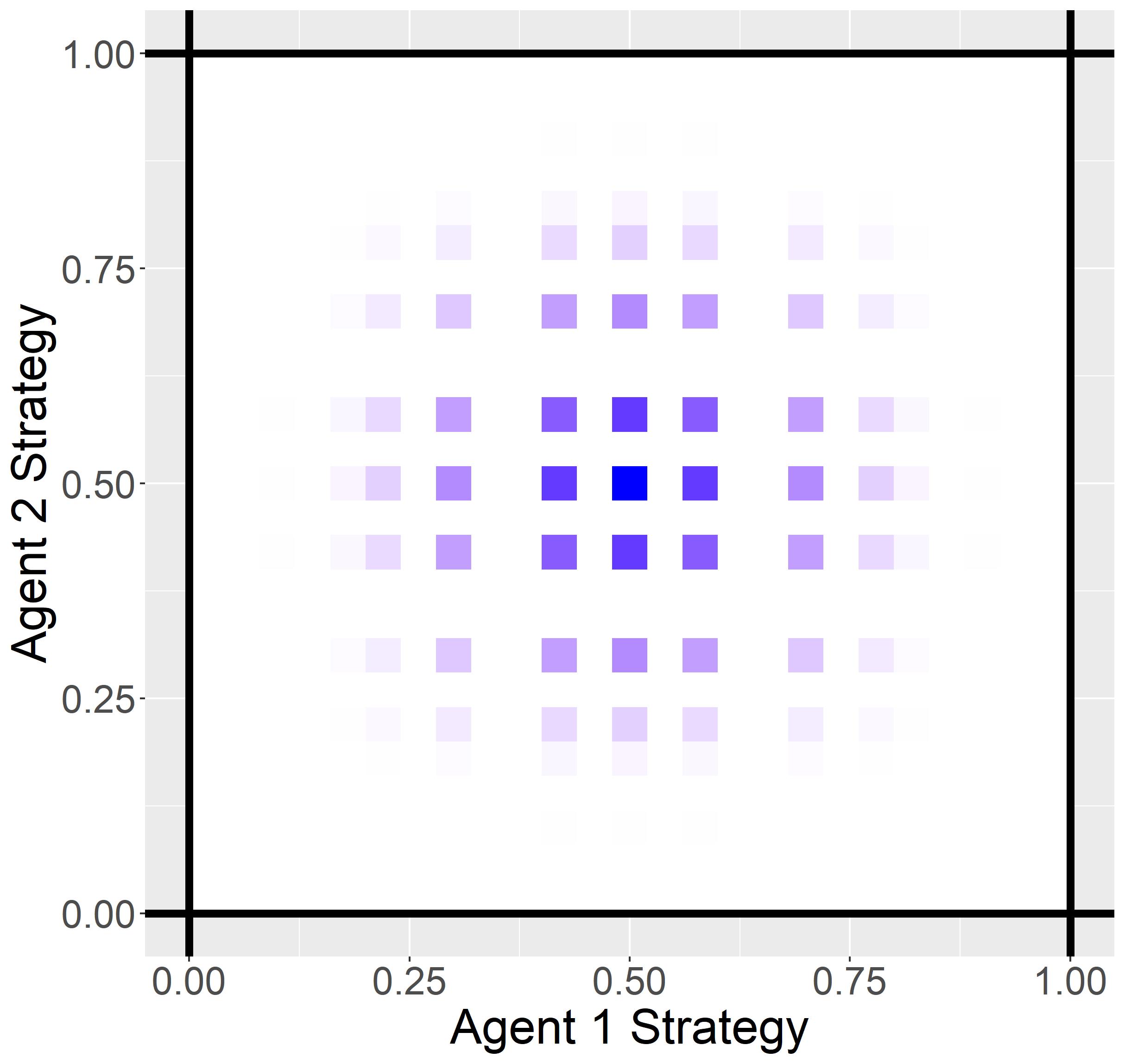}}
    			\hspace{\Space}
    			\subfigure[100 Iterations]{\label{fig:100}%
    				\includegraphics[scale=\size]{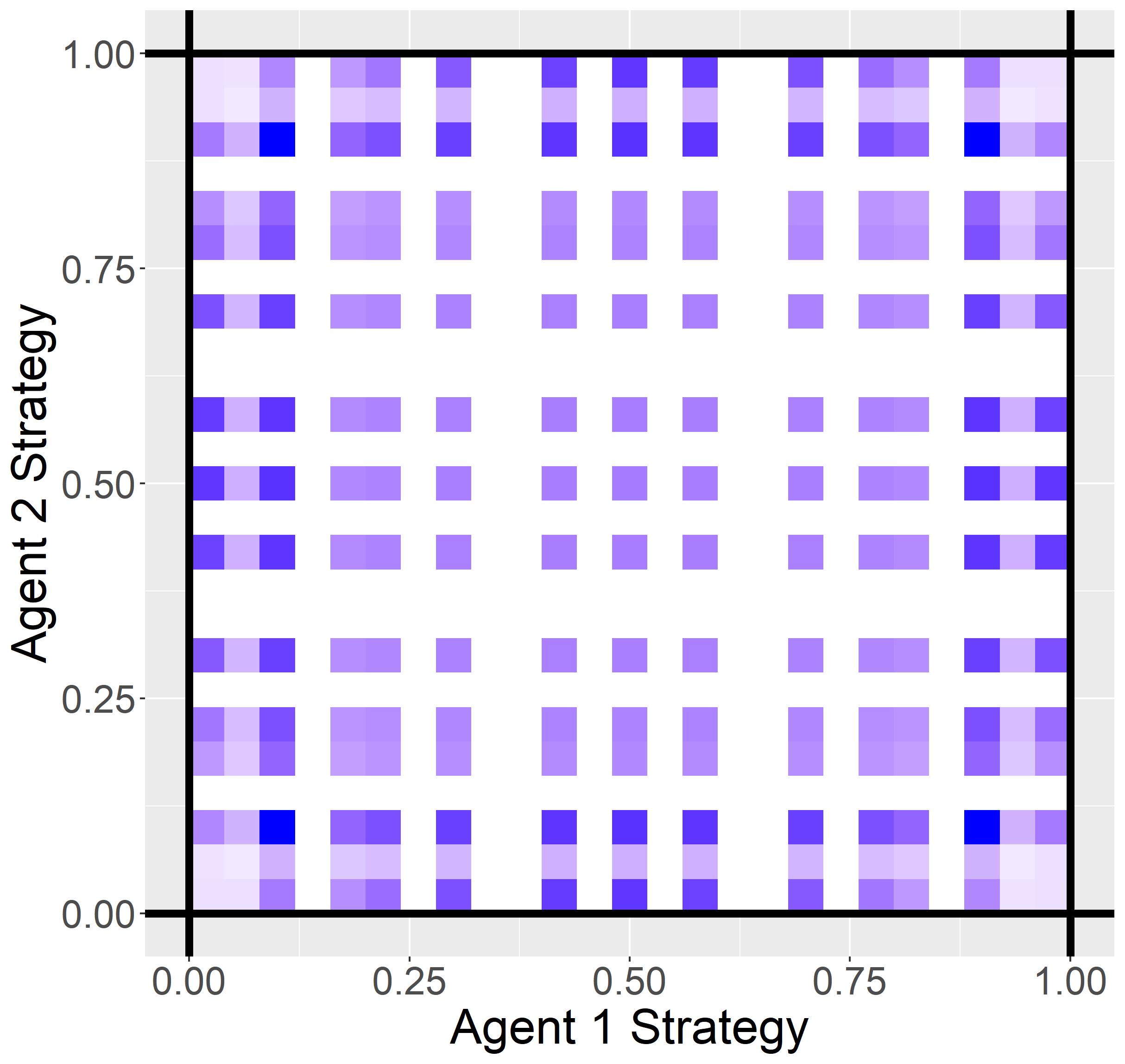}}
    			\hspace{\Space}
    			\subfigure[250 Iterations]{\label{fig:250}%
    				\includegraphics[scale=\size]{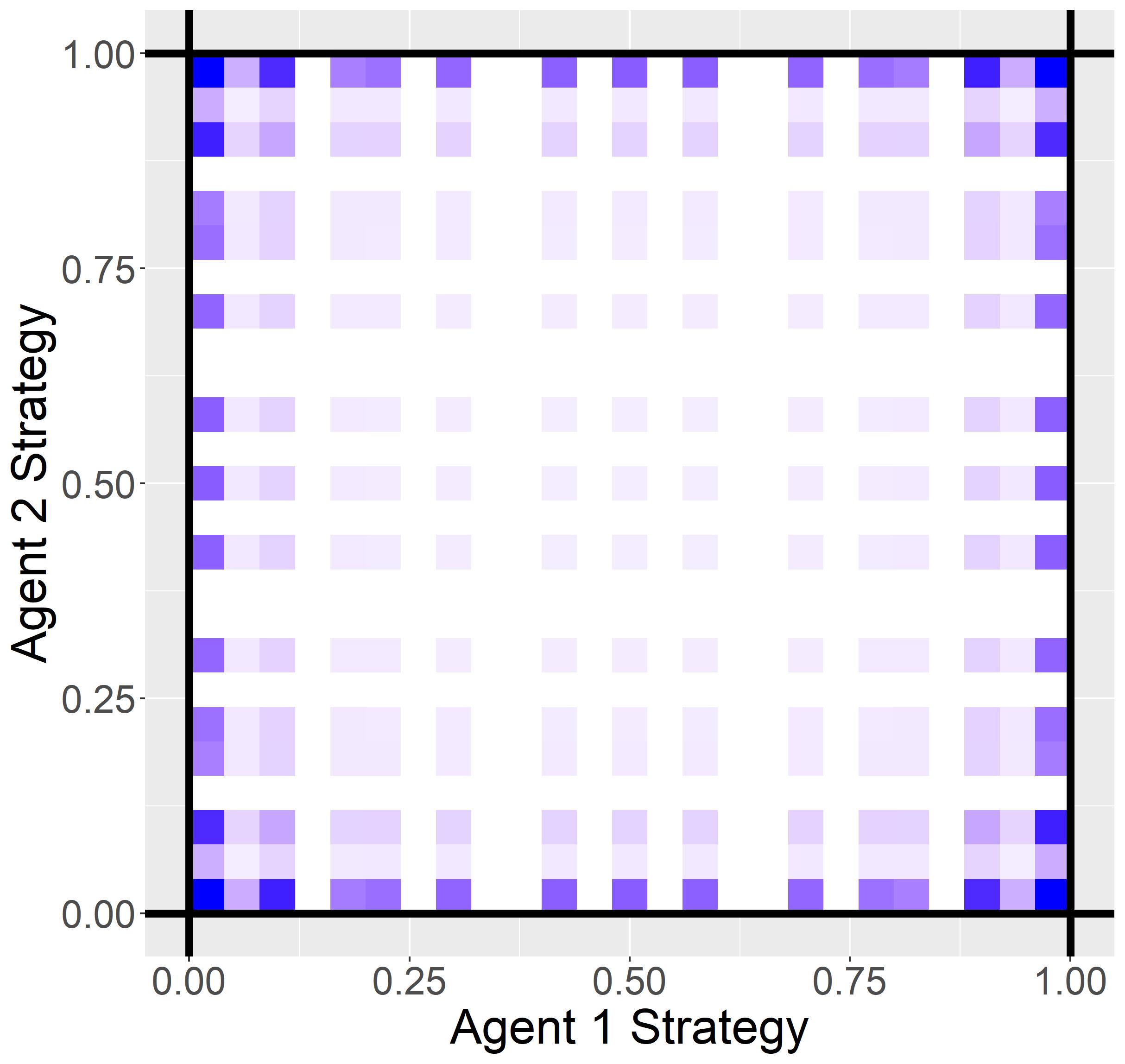}}
    			\hspace{\Space}
    			\subfigure[500 Iterations]{\label{fig:500}%
    				\includegraphics[scale=\size]{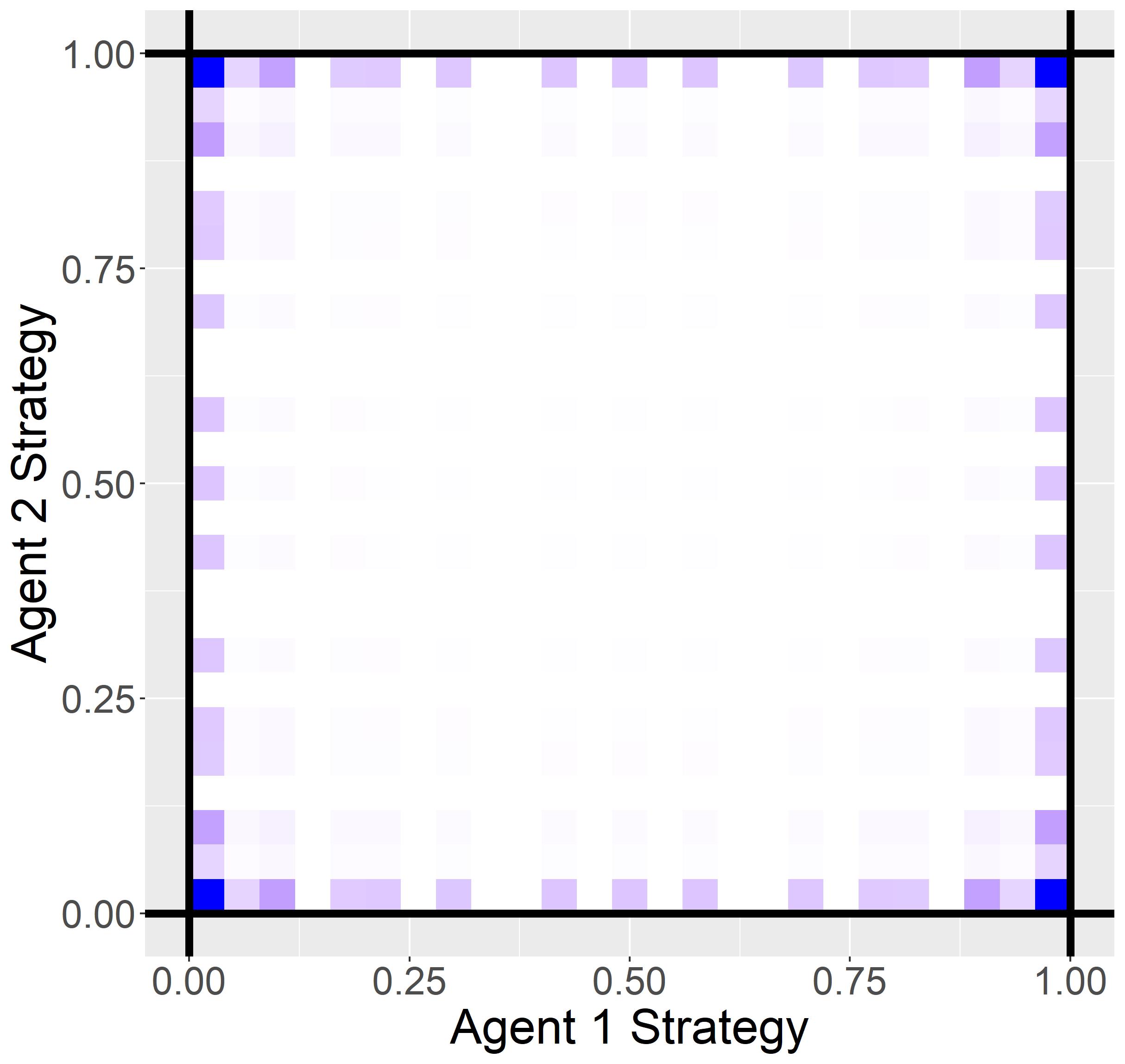}}
    		}  
    	{\caption{The Game Matching Pennies Updated with Stochastic MWU. The -axis and y-axis Show Agent 1's and Agent 2's Probabilities of Playing ``Heads'' Respectively.  The Opaqueness of Each Rectangle is Proportional to the Probability that the Agents' Strategies Appear in the Region. The Four Figures Demonstrate that Stochastic MWU Dissipates from the Equilibrium and to the Boundary -- Specifically to the Extreme Points (Pure Strategies) of the Region. In this Paper, we Prove These Strategies Converge to the Pure Strategies (Theorem \ref{thm:Convergence}). \label{fig:Opening}}}
    	\end{figure}

\section{Introduction}

Zero-sum games are arguably the most well studied class of social interactions within game theory.
At the same time one of the most well known results in online learning in games is that regret-minimizing algorithms such as Multiplicative Weights Update (MWU) and Follow-the-Regularized-Leader (FTRL) converge in a time-average sense to Nash equilibria \cite{Cesa06,freund1999adaptive}.

Recent results, however, have started to reveal a more intricate and detailed picture about the \textit{day-to-day behavior} of the dynamics by taking a dynamical systems approach, that is orthogonal to the typical regret approach, and by exploiting insights from  continuous-time dynamics and differential equations~\cite{piliouras2014optimization,GeorgiosSODA18}. \cite{BaileyEC18} showed that all FTRL dynamics, including MWU, GD \textit{diverge away} from the maxmin equilibrium in deterministic settings. In fact, they do so chaotically with small perturbations to the initial conditions leading to quickly diverging orbits~\cite{cheung19a,cheung2020chaos}.

The above chaotic instability results seem to paint a rather bleak picture when it comes to developing a common sense understanding of how these dynamics actually behave in practice. That is no long term predictions are meaningfully possible. Furthermore, all the work above focused on \textit{deterministic} dynamical systems, describing the expected behavior of learning dynamics. In reality, all regret-minimizing algorithms are randomized depending on the stochastic sampling of agents actions which only introduces a further source on uncertainty in an already dynamically complex system. This leads in to our central question.

\textit{ 
Is it possible to understand
the behavior
of stochastic MWU (and other FTRL dynamics) 
 in zero-sum games beyond merely stating negative, instability or unpredictability results? How do the dynamics actually behave?}



At a first glance it may seem rather surprising that despite the classic nature of MWU~\cite{AHK2012}, its day-to-day behavior in most standard of game theoretic settings, zero-sum games, has not be analyzed before. Indeed, MWU
has been rediscovered many times either in its exact form on in numerous closely connected variants, found in \cite{Cesa06, fudenberg1998theory, littlestone1994weighted, vovk1998game, Wei95, freund1997decision}. The same applies of course for general FTRL dynamics, arguably the most well known class of regret-minimizing dynamics and a staple of online optimization theory~\cite{hazan2016introduction}.
Nevertheless, so far the focus on the analysis of such algorithms was in understanding their regret properties, whose convergence to zero in competitive games immediately implies time-average convergence to Nash. Understanding their day-to-day stochastic behavior, as we show, requires a combination of non-trivial tools and techniques spanning convex optimization (Bregman divergence, Fenchel couplings), Markov chains in countable state spaces and Feller chains in uncountable state spaces, dynamical systems (Lyapunov theory) and game theory.

{\bf Our results and techniques.} 
We establish that stochastic variants of Follow-the-Regularized-Leader (FTRL) with fixed learning rates result in agents that almost always play strategies close to the boundary in the setting of network zero-sum games (Theorem \ref{thm:boundary}).
We accomplish this by formally showing that Stochastic FTRL induces an irreducible Markov chain where each iteration of FTRL causes agents to move away from the set of Nash equilibrium in expectation. 
The evolution of this Markov chain is depicted in Figure \ref{fig:Opening} with the blue regions (strategies) diverging from Nash and to the extreme points. 

In our key technical result, we show that in the setting of 2-agent zero-sum games where agents use MWU, every convergent subsequence of agent strategies must converge to a mixture of pure strategies, i.e., {\bf agents spend almost all of their time playing effectively pure strategies} (Theorem \ref{thm:Convergence}). 
We remark that this result is substantially stronger than the divergence result known for deterministic MWU \cite{BaileyEC18}, which only shows strategies converge to the boundary of the strategy space. 
 Given that randomized strategies are the normative solution concepts for zero-sum games (e.g. Matching Pennies, Rock-Paper-Scissors), we showcase a maximal disagreement between the predictions of Nash equilibrium (``defensive" maxmin play;  trying to minimize potential lossless by being unpredictable to the opponent) and the actual behavior of learning dynamics in practice (``strong-headed" behavior; playing with full confidence strategies than can be exploited by the opponent which happen to currently have good historical returns).
We establish this result by by constructing a Feller chain and proving that the only stationary points of the dynamics are pure strategies (Theorems \ref{thm:Feller} and \ref{thm:Stationary}). 
Theorem \ref{thm:Convergence} then follows from known results in Markov theory.
Our results showcase the value of introducing techniques related to ergodic theory~\cite{cornfeld2012ergodic}, where the object of study are the statistical properties of system trajectories in the understanding of online learning, optimization and game theory. 


\section{Preliminaries \& Model} \label{sec:notation}

\subsection{Normal Form Games}
A finite normal-form game $\Gamma  \equiv \Gamma({\cal N}, {\cal S}, A)$ consists of a set of agents ${\cal N}=\{1,...,N\}$ where agent $i$ may select from a finite set of actions or pure strategies ${\cal S}_i=\{1,...,{S}_i\}$.
Given the set of actions $s\in {\cal S}:=\times_{i\in {\cal N}} {\cal S}_i$, agent $i$ receives the payout $\sum_{j \neq i} A^{(ij)}_{s_i,s_j}$ where $A^{(ij)}$ is the \emph{payoff matrix} between agents $i$ and $j$. 
To simplify notation, we let $e_{s_i}$ be the standard basis vector where the $s_i$th coordinate of $e_{s_i}$ is 1 and all other coordinates are 0. 
With this notation, we denote $i$'s payout as $\langle e_{s_i},\sum_{j \neq i} A^{(ij)}e_{s_j}\rangle$.

In this paper, we study only \emph{non-trivial games} -- games where there is at least one agent $\{i,j\}$ and a pair of strategies $s, \bar{s}\in {\cal S}$ such that $\sum_{j\neq i} A^{(ij)}e_{s_j}\neq \sum_{j\neq i} A^{(ij)}e_{\bar{s}_j}$.
In a trivial game, agents' payouts are independent of the actions of other agents, every mixed strategy is a Nash equilibrium,  and thus the ``dynamics'' of the game are irrelevant.

Agents are also allowed to use mixed strategies $x_i= (x_{is_i})_{s_i\in {\cal S}_i}\in {\cal X}_i= \{ x_i\in \mathbb{R}_{\geq 0}^{S_i}:\sum_{s_i\in {\cal S}_i} x_{is_i}=1\}$, i.e., $x_i$ is a probability vector over the set of pure strategies.
A strategy is fully mixed if $x_{is_i}>0$ for all $s_i\in {\cal S}_i$ and $i\in {\cal N}$ -- equivalently, $x_i\in rel.int({\cal X}_i)$. 
In practice, a mixed strategy describes a probability distribution over the set of pure strategies. Given a probability distribution $x_i$, agent $i$ selects strategy
$s_i \text{ with probability } x_{is_i}$.
Thus, given a set of mixed strategies $x=\times_{i\in {\cal N}}x_i$, agent $i$'s expected payout is $\langle x_i, \sum_{j\neq i} A^{(ij)} x_j\rangle$.

The most commonly used solution concept for games is the \emph{Nash equilibrium}. 
A Nash equilibrium (NE) is  a strategy $x^*\in {\cal X}$ where no agent can do better by deviating from $x_i^*$. 
Formally, 
\begin{align}\label{eqn:Nash} \langle x_i^*, \sum_{j\neq i} A^{(ij)} x_j^*\rangle \geq \langle x_i, \sum_{j\neq i} A^{(ij)} x_j^*\rangle \forall x_i\in {\cal X}_i,  i\in {\cal N} \tag{Nash Equilibrium}\end{align}

\subsubsection{Zero-Sum Games}
We specifically study network zero-sum games -- games where $A^{(ij)}=-[A^{(ji)}]^\intercal$ implying agent $i$ and agent $j$'s total utility for their interaction, i.e., $\langle x_i, A^{(ij)} x_j\rangle + \langle x_j, A^{(ji)} x_i\rangle$, is zero. 
This implies that the total utility gained from all agents is also zero, i.e., $\sum_{i\in {\cal N}}\langle x_i, \sum_{j\neq i} A^{(ij)} x_j\rangle =0$.

Within this paper, we assume that there exists a Nash equilibrium $x^*$ such that  $\langle x_i^*, \sum_{j\neq i} A^{(ij)} x_j^*\rangle=0$ for all $i\in {\cal N}$.
Notice that shifting $A^{(ij)}$ by a constant does not change the set of Nash equilibria in the game. 
Iteratively for $i=1,...,N-1$, $\{A^{(ij)}\}_{j=i+1}^N$ can be increased by a constant to ensure   $\langle x_i^*, \sum_{j\neq i} A^{(ij)} x_j^*\rangle=0$ without altering previous agents' payouts. 
Since the game is zero-sum, this immediately implies $\langle x_N^*, \sum_{j\neq N} A^{(Nj)} x_j^*\rangle=0$.

\subsection{Online Learning in the Deterministic Implementation of Mixed Strategies}

Rarely in the study of games do agents know the set of Nash equilibria, or even the utility function, prior to selecting their strategies. 
Rather, agents iteratively update their mixed strategies overtime based on the performance of pure strategies in prior iterations via an online learning algorithm.   
The most classical set of online learning algorithms are the Follow-the-Regularized-Leader (FTRL) algorithms, e.g., Gradient Descent, and Multiplicative Weights Update (MWU). 
Given a strictly convex regularizer $h_i: {\cal X}_i \to \mathbb{R}$, an agent updates their strategies via
\begin{equation}\label{eqn:D.FTRL}\tag{Deterministic FTRL}
\begin{aligned}
{y}_{i}^t &= {y}_{i}^{t-1}+\sum_{j\neq i}  A^{(ij)}x_j^{t-1};\\
{x}_i^t &= \argmax_{x_i \in {\cal X}_i} \left\{\left<y_i^t,x_i\right>-\frac{h_i(x_i)}{\eta_i}\right\}.
\end{aligned}
\end{equation}

The payoff vector $y_i^t$ represents the cumulative payout for any pure strategy since the beginning of the time.  
Formally, $y_{is_i}^t-y_{is_i}^0$ denotes the cumulative payout agent would have received has she played pure strategy $s_i$ from iteration $0$ to iteration $t-1$. 
Thus, agent $i$ selects the strategy $x_i^t$ that maximizes the difference between her cumulative payout since time 0 and a strictly convex regularization term. 

The learning rate is specified by $\eta_i$.
Since most of our results hold for general regularizers, we will often embed the learning rate into the regularizer and assume $\eta_i=1$ for all agents.

The two most well known variants of FTRL are Gradient Descent and MWU algorithms obtained via the regularizers $h_i(x_i)=||x_i||_2^2/2$ and $h_i(x_i)=\sum_{s_i\in {\cal S}_i} x_{is_i}\ln x_{is_i}$ respectively. 
By iteratively solving \ref{eqn:D.FTRL}, (MWU) can be written as 
\begin{equation}\label{eqn:D.MWU}\tag{Deterministic MWU}
\begin{aligned}
x_{is_i}^t&=\frac{x_{is_i}^{t-1}\exp{(\eta_i\cdot  \sum_{j\neq i} e_{s_i}A^{(ij)}x_j^{t-1})}}{\sum_{\bar{s}_i\in {\cal S}_i}x_{i\bar{s}_i}^{t-1}\exp{(\eta_i\cdot  \sum_{j\neq i} e_{\bar{s}_i}A^{(ij)}x_j^{t-1})}}.
\end{aligned}
\end{equation}

Another important function we use in our analysis of (\ref{eqn:D.FTRL}) is the convex conjugate  $h^*_i:\mathbb{R}^{|{\cal S}_i|} \mapsto R$ given by 
$h^*_i(y_i)=\sup_{x_i \in \mathcal{X}_i} \left\{\left<y_i,x_i\right>-h_i(x_i)\right\}$.
Specifically, $h^*_i$ establishes a duality between the mixed strategy $x_i^t$ and the payoff vector $y_i^t$. 
The \emph{maximizing argument} \cite{HKSS12}, a well known property of FTRL, establishes the connection $x_i^t=\nabla h^*_i(y_i^t)$ when $\eta_i=1$.

\subsection{Stochastic FTRL}\label{sec:Stochastic}
In the definition of (\ref{eqn:D.FTRL}) and (\ref{eqn:D.MWU}) we assume that the mixed strategies are implemented deterministically and therefore the update rules are deterministic. 
In practice however, the mixed strategies denote a probability distribution over the set of pure strategies and the realized strategies are determined randomly. 
In this section, we extend the definitions of FTRL and MWU to stochastic implementations of the set of mixed strategies.

In this setting, both the cumulative payoff vectors $y^t$ and mixed strategies $x^t$ are random variables. 
We use standard notation from probability where the uppercase $Y^t$ denotes the probability mass function for agents' cumulative payoff vectors and $X^t$ to denote the probability mass function used to select agents' pure strategies in iteration $t$.
\begin{equation}\label{eqn:S.FTRL}\tag{Stochastic FTRL}
\begin{aligned}
{Y}_{i}^t &={Y}_{i}^{t-1} +\sum_{j\neq i} A^{(ij)}s_j \ \text{with  probability} \ \prod_{j\neq i} X_{js_j}^{t-1}\\
{X}_i^{t} &= \argmax_{x_i \in \mathcal{X}} \left\{\left<Y_i^{t},x_i\right>-\frac{h_i(x_i)}{\eta_i}\right\}\\
\end{aligned}
\end{equation}  

Formally, we actually work with the dynamic ${Y'}_{i}^t={Y}_{i}^t -\mathbf{1}\cdot Y_{i1}^t$ where $\mathbf{1}$ is a vector of 1s so that the first component of $Y_i'$ is always 0. In the definition of (\ref{eqn:S.FTRL}), subtracting a constraint from $Y_i^t$ since $x_i$ is a probability vector and shifting $Y_i^t$ by a constant just shifts $\langle Y_i^t,x_i\rangle$ by a constant. 
This distinction allows us to establish a bijection between $X$ and $Y$. 

The standard regret proof for FTRL holds when your opponent updates in any fashion, including randomly, and therefore extends to stochastic implementations of FTRL.  
However, we know of no results that examine the actual dynamics of stochastic implementations of FTRL.
Typically, we expect a random variable not to deviate too much from its expectation.
Indeed, by linearity of expectation, 
\begin{align*}
E[Y^t_{i}|Y^{t-1}=y^{t-1}]
=y^{t-1}_{i}+\eta_i \cdot \sum_{j\neq i} A^{(ij)}x^{t-1}_{i}
=y^t_{i}
\end{align*}
where $x_i^{t-1}=\nabla h^*_i(y_i^{t-1})$ and $y^t$ are found the (\ref{eqn:D.FTRL}). 
However, the dynamics of (\ref{eqn:D.FTRL}) can be drastically different from (\ref{eqn:S.FTRL}). 
For instance, (\ref{eqn:D.FTRL}) results in a stationary strategy when $x^0$ is a Nash equilibrium while we show later that (\ref{eqn:S.FTRL}) tends to the boundary of the strategy space regardless of the initial strategy.

In particular, we are interested in the stochastic version of (\ref{eqn:D.MWU}) given by: 
\begin{equation}\label{eqn:S.MWU}\tag{Stochastic MWU}
\begin{aligned}
X_{is_i}^t&=\frac{X_{is_i}^{t-1}\exp{(\eta_i\cdot  \sum_{j\neq i} e_{s_i}A^{(ij)}X_j^{t-1})}}{\sum_{\bar{s}_i\in {\cal S}_i}X_{i\bar{s}_i}^{t-1}\exp{(\eta_i\cdot  \sum_{j\neq i} e_{\bar{s}_i}A^{(ij)}X_j^{t-1})}}.
\end{aligned}
\end{equation}
which is obtained from (\ref{eqn:S.FTRL}) using the regularizer $h_i(x)=\sum_{s_i\in {\cal S}_i} x_{is_i}\ln x_{is_i}$. 
We formally show this equivalence in Appendix \ref{app:Noninterior}.

\subsection{Bregman Divergence from a Nash Equilibrium} 

To establish that the mixed strategies tend to the boundary, we work with a ``distance'' between the agents' strategies and an interior Nash equilibrium.
The general idea is that if the distance to an interior Nash equilibrium is large enough, then the agents' strategies must be close to the boundary. 
The standard notion of distance used when studying online learning algorithms is the Bregman divergence.


When analyzing (\ref{eqn:S.FTRL}) with regularizer $h_i$, we study the Bregman divergence with regularizer $h_i$. 
\begin{align}D_{h}(x^*||x)
=\sum_{i\in{\cal N}}\left( h_i(x_i^*)-h_i(x_i)- \langle \nabla h_i(x_i), x_i^*-x_i\rangle \right)\tag{Bregman Divergence}
\end{align}
For (\ref{eqn:D.MWU}) and (\ref{eqn:S.MWU}), the Bregman divergence is referred to as the Kullback-Leibler (K-L) divergence and is given by
\begin{align}D_{KL}(x^*||x)
=\sum_{i\in{\cal N}}\sum_{s_i\in {\cal S}_i}x^*_{is_i}\left(\ln{x^*_{is_i}}-\ln{x_{is_i}}\right).\label{eqn:DefKL}\tag{K-L Divergence}
\end{align}

Another notion of distance we use to understand the dynamics is the Fenchel-coupling that measures the distance from agents' strategies to the Nash equilibrium in the space of payoff vectors. 
The Fenchel-coupling is given by
\begin{align}\label{eqn:Fenchel}
F_h(x^*||y)=\sum_{i \in {\cal N}}\left(h_i(x^*_i)+h^*_i(y_i)-\langle y_i,x^*_i \rangle \right).\tag{Fenchel-coupling}
\end{align}

The Fenchel-coupling and Bregman divergence are closely related.  
Formally, $F_h(x^*||y^t)\geq D_h(x^*||x^t)$ where equality holds whenever $x^t$ is fully mixed (see \cite{GeorgiosSODA18, Mer16} and Lemma \ref{lem:FBequiv}). 
In (\ref{eqn:S.FTRL}), we study the random variables $D_h(x^*||X^t)$ and $F_h(x^*||Y^t)$ to establish that that agents tend to select mixed strategies close to the boundary. 

We also remark that shifting $y$ by a constant does not change the Fenchel-coupling for our particular dynamics. 
Since $x$ in the definition of $h^*_i$ is a probability vector, increasing $y$ by $c\cdot \mathbf{1}$ causes $h^*_i$ to increase by $c$ while $-\langle y_i,x_i^*\rangle$ decreases by $c$ and as a result, the Fenchel-coupling does not change with constant shifts to $y_i$.

\subsection{Markov Chain Basics}

In (\ref{eqn:S.FTRL}), the random variable $Y^t$ depends only on the value of the same random variable in the previous iteration, i.e., ${Y}_{i}^t ={Y}_{i}^{t-1} +\sum_{j\neq i} A^{(ij)}s_j \ \text{with  probability} \ \prod_{j\neq i} X_{js_j}^{t-1}=\prod_{j\neq i} \nabla h^*_{js_j}({Y}_{j}^{t-1})$.
These types of memory-less properties are frequently modeled as Markov chains. In this section, we introduce the notation necessary to understand (\ref{eqn:S.FTRL}) as a Markov chain.  
We introduce the notation with respect to the random variables and strategy spaces introduced in the previous sections. 
We take ${\cal B}({\cal X})$ to be the Borel $\sigma$-algebra on ${\cal X}$. 

\begin{definition}[Transition Probability Kernel]
A deterministic function $P:\mathcal{X} \times \mathcal{B}(\mathcal{X}) \mapsto \mathbb{R}^{+}$ is a Transition Probability Kernel if
\begin{itemize}
\item {for each} $A \in \mathcal{B}(\mathcal{X}), P( \cdot ,A)$ {is a non-negative measurable function on} $\mathcal{X}$
\item {for each} $x \in \mathcal{X}, P(x, \cdot )$ is a probability measure on $\mathcal{B}(\mathcal{X})$.
\end{itemize}
\end{definition}

\begin{definition}
	The probability measure $\pi$ is invariant (or stationary) with respect to $P$ if $\pi P = \pi$, i.e.,
	\begin{align*}
		\int_{\cal X} \pi(dx) P(x,A) = \pi (A) \ \text{ for all } A\in {\cal B}({\cal X})\tag{Stationarity Condition}
	\end{align*} 
\end{definition}

\begin{definition}\label{def:Feller}
	A Markov Chain defined on a metric space ${\cal X}$ is said to be a Feller chain if $P(x^{(n)}, \cdot) \Rightarrow P(x,\cdot)$ as $x^{(n)} \to x$, i.e., $P(x^{(n)},\cdot)$ converges weakly (or in distribution) to $P(x,\cdot)$ as $x^{(n)} \to x$. 
\end{definition}

\section{Convergence to the Boundary}\label{sec:converge}

We begin by showing that for any set in the interior of ${\cal X}$, that once the strategies leave this set then we expect an infinite number of iterations to pass before returning whenever there is a fully-mixed Nash equilibrium.
This implies that agents almost always play strategies arbitrarily close to boundary and FTRL results in extremal strategies.

\begin{restatable}[]{theorem}{Return}\label{thm:returntime}
	Let $\Gamma({\cal N}, {\cal S}, A)$ be any non-trivial network zero-sum game with a rational fully-mixed Nash equilibrium $x^*$ and let $B$ be any compact set in the interior of ${\cal X}$.  Suppose $\{X^t\}_{t=0}^\infty$ is updated according to (\ref{eqn:S.FTRL}) where $\nabla h^*_i(y_i)>{\bf 0}$ for all $y_i$. The expected time to return to the set $B$ after leaving the set $B$ is infinity. 
\end{restatable}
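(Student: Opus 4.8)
The plan is to build a supermartingale (or a submartingale, depending on sign conventions) out of the Fenchel-coupling $F_h(x^* \| Y^t)$ and then invoke a standard Markov-chain recurrence argument: if a nonnegative quantity increases in expectation at every step, the chain cannot return to a region where that quantity is small in finite expected time. Concretely, I would track the real-valued stochastic process $V^t = F_h(x^* \| Y^t)$. The key computation is to show that for \ref{eqn:S.FTRL} in a zero-sum game,
\begin{equation*}
E\!\left[F_h(x^* \| Y^{t}) \,\middle|\, Y^{t-1} = y^{t-1}\right] \;\geq\; F_h(x^* \| y^{t-1}) + c
\end{equation*}
for some strictly positive constant $c$ that depends only on the game (through the nontriviality assumption) and not on the current state, so long as the strategies stay bounded away from the pure-strategy vertices. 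I expect this to follow from the standard FTRL energy/Fenchel identity: the conditional expectation of the increment in $F_h$ decomposes into a first-order term that vanishes by the zero-sum property at a Nash equilibrium, plus a strictly positive second-order (Bregman) term coming from the variance of the randomly sampled payoff vector $\sum_{j\neq i} A^{(ij)} s_j$ around its mean $\sum_{j\neq i} A^{(ij)} X_j^{t-1}$. Non-triviality guarantees this variance is bounded below by a positive constant.

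**Next**, I would convert the expected increase of the Fenchel-coupling into the statement about return times. Since $B$ is compact and sits in the interior of ${\cal X}$, the Fenchel-coupling is bounded above on $B$ by some finite $M$ (using $F_h \geq D_h$ with equality on fully-mixed strategies, and continuity of $D_h$ on the compact interior set $B$). The plan is to consider the chain started just after it leaves $B$ and to let $\tau$ be the first return time to $B$. On the event $\{\tau = n\}$ we have $Y^n$ landing back in the region where $F_h(x^* \| Y^n) \le M$. Using the optional-stopping / monotone-increase structure, the expected value of $F_h(x^* \| Y^{t})$ grows at least linearly in $t$ (by an additive constant $c$ per step), which is incompatible with returning to a bounded-coupling region in finite expected time. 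Making this rigorous is cleanest by a Borel--Cantelli or by a direct drift argument: since $F_h(x^* \| Y^t) - ct$ is a submartingale bounded below, I can show $F_h(x^* \| Y^t) \to \infty$ almost surely, hence the process enters $B$ only finitely often on each excursion, and the expected return time diverges.

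**The main obstacle** will be handling the interplay between the $Y$-space (where the drift argument is clean because the Fenchel-coupling is defined there and the increments are additive) and the $X$-space (where the set $B$ lives and where the MWU/FTRL map is nonlinear). I would exploit the bijection between $X$ and $Y$ established via the normalization ${Y'}_i^t = Y_i^t - \mathbf{1}\cdot Y_{i1}^t$ and the maximizing-argument identity $x_i^t = \nabla h_i^*(y_i^t)$ to transfer the compact set $B \subset \mathrm{int}({\cal X})$ to a corresponding region in $Y$-space; the hypothesis $\nabla h_i^*(y_i) > \mathbf{0}$ ensures strategies stay interior so that $F_h = D_h$ and the divergence bound $M$ is finite. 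A secondary technical point is verifying the uniform positive lower bound $c$ on the drift: I would need the variance term to be bounded away from zero uniformly, which requires that on the complement of $B$ the strategies are not already pinned at a vertex — but since we only measure the drift for the excursion and the argument only needs $F_h \to \infty$, a lower bound that degrades near the boundary is in fact acceptable, because large $F_h$ already corresponds to near-boundary play. The cleanest route is therefore to establish the strict per-step increase of $E[F_h]$ globally (using rationality of $x^*$ to get a rational, hence bounded-denominator, lower bound on the payoff-vector variance) and then read off divergence of the return time directly.
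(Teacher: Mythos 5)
Your choice of Lyapunov function (the Fenchel coupling in the dual space), the use of the primal--dual bijection to transfer the compact set $B$ to a compact $\bar B$ in ${\cal Y}$, and the identification of Jensen's inequality applied to the convex conjugate as the source of the drift all match the paper's proof. The gap is in how you convert drift into an infinite expected return time. You posit a \emph{uniform} per-step increase $E[F_h(x^*\|Y^t)\mid Y^{t-1}=y^{t-1}]\geq F_h(x^*\|y^{t-1})+c$ with $c>0$ independent of the state, and then run a submartingale argument on $F_h-ct$ to conclude $F_h\to\infty$ almost surely. No such uniform $c$ exists: the strict increase established in the paper (Theorem \ref{thm:Drift}, via Jensen plus the deterministic divergence result of Bailey--Piliouras) degrades as the strategies approach the boundary, which is precisely where the chain spends almost all of its time by Theorem \ref{thm:boundary}. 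You acknowledge this and say a degrading lower bound is ``acceptable,'' but once $c$ is state-dependent and can vanish, the process $F_h-ct$ is no longer a submartingale for any fixed $c>0$, the almost-sure divergence of $F_h$ does not follow, and your chain of implications collapses. Indeed the paper never proves (and does not need) $F_h\to\infty$ or transience; it proves only the null-recurrence-type statement $E[\tau^+_{\bar B}]=\infty$.

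The correct tool --- and the one the paper uses --- is a Lamperti-type criterion (Corollary 2.6.11 of Menshikov et al.): for an \emph{irreducible} chain on a countable state space, \emph{nonnegative} drift of $f$ outside $\bar B$, together with a uniform bound on $E[(f(Y^{t+1})-f(Y^t))^+\mid Y^t=y]$ and the existence of a single state outside $\bar B$ with $f$-value exceeding $\max_{z\in\bar B}f(z)$, already forces $E[\tau^+_{\bar B}]=\infty$. Your proposal never verifies the bounded-positive-increment condition, and this condition is not decorative: mere nonnegative drift of a nonnegative function, without control on the upward jumps, does not imply infinite expected return time (one can build counterexamples with rare, very large jumps). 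The paper checks it by convexity of $h_i^*$ and compactness of ${\cal X}_i$, giving the uniform bound $b=\sum_{i\in{\cal N}}\max_{x_i\in{\cal X}_i}\langle x_i,A^{(ij)}e_{s_j}\rangle$. You also do not invoke irreducibility of the dual-space chain (Theorem \ref{thm:Irreducible}), which is a hypothesis of the criterion and requires the rationality of $x^*$ that you mention only in passing. In short: keep your Lyapunov function and your reduction to the dual space, drop the uniform-drift/transience route, and replace it with the zero-drift--plus--bounded-jumps recurrence criterion.
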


\begin{restatable}[]{theorem}{Boundary}\label{thm:boundary}
	Let $\Gamma({\cal N}, {\cal S}, A)$ be any non-trivial network zero-sum game with a rational fully-mixed Nash equilibrium $x^*$ and let $B$ be any compact set in the interior of ${\cal X}$.  Suppose $\{X^t\}_{t=0}^\infty$ is updated according to (\ref{eqn:S.FTRL}) where $\nabla h^*_i(y_i)>{\bf 0}$ for all $y_i$. The proportion of strategies $\{X^t\}_{t=0}^T$ where $X^t\in B$ goes to 0 as $T\to \infty$ almost surely. 
\end{restatable}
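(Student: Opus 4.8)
The plan is to reduce the almost-sure statement to the ergodic behaviour of a countable, irreducible Markov chain and then feed in the infinite-return-time bound of Theorem~\ref{thm:returntime}. First I would recast $\{X^t\}$ as a countable-state chain: by the maximizing-argument identity $X^t=\nabla h^*(Y^t)$ together with the normalization ${Y'}^t=Y^t-\mathbf 1\cdot Y_1^t$, the process $\{{Y'}^t\}$ is a bijective recoding of $\{X^t\}$. Since each increment $\sum_{j\neq i}A^{(ij)}s_j$ takes one of finitely many rational values, the reachable payoff vectors lie on a locally finite lattice (this is exactly where rationality is used), so the chain lives on a countable, discrete state space and, as established for the return-time result, is irreducible.

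The second step is to show that the event $\{X^t\in B\}$ corresponds to ${Y'}^t$ lying in a \emph{finite} set of states. Because $B$ is compact and contained in $rel.int({\cal X})$ while a steep regularizer forces $\nabla h^*_i(y_i)$ toward the boundary of ${\cal X}_i$ as $\|y_i\|\to\infty$ (here the hypothesis $\nabla h^*_i(y_i)>\mathbf 0$ and the duality $x_i=\nabla h^*_i(y_i)$ enter), the preimage $B':=(\nabla h^*)^{-1}(B)$ is bounded in the normalized coordinates. A bounded subset of a locally finite lattice is finite, so $B'$ consists of finitely many states and $\mathbf 1\{X^t\in B\}=\sum_{x\in B'}\mathbf 1\{{Y'}^t=x\}$.

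Third, I would use Theorem~\ref{thm:returntime} to rule out positive recurrence. In an irreducible positive-recurrent chain every mean first-passage time is finite, so the expected time to return to the finite set $B'$ after leaving it (being dominated by the return time to any single fixed state of $B'$) would also be finite; this contradicts Theorem~\ref{thm:returntime}. Hence the chain is either null recurrent or transient. Finally I would invoke the standard ergodic theory of countable chains: if the chain is null recurrent, every state has infinite mean return time, so the strong law of large numbers for Markov chains gives $\tfrac1T\sum_{t=0}^{T-1}\mathbf 1\{{Y'}^t=x\}\to 0$ almost surely for each state $x$; if it is transient, each state is visited only finitely often almost surely, giving the same conclusion. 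Summing over the finitely many states of $B'$ (finiteness is what licenses exchanging the sum with the almost-sure limit) yields $\tfrac1T\sum_{t=0}^{T-1}\mathbf 1\{X^t\in B\}\to 0$ almost surely, which is the claim.

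I expect the main obstacle to lie in the second and third steps rather than in the invocation of the ergodic theorem. One must verify carefully that the compact interior set genuinely pulls back to a finite collection of lattice states, combining the coercivity/steepness of $h^*$ with the discreteness coming from rationality, and one must transfer the set-level return-time statement of Theorem~\ref{thm:returntime} into the clean dichotomy null-recurrent-or-transient. Once these are in place, the per-state strong law and the finiteness of $B'$ deliver the result routinely.
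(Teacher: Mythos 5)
Your proposal is correct and follows essentially the same route as the paper: pass to the normalized dual-space chain, use rationality to see that the compact set pulls back to finitely many lattice states of the irreducible countable chain, combine Theorem~\ref{thm:returntime} with the strong-law/ergodic theorem for countable Markov chains (the paper cites Norris, Theorem~1.10.2, giving $V_y(T)/T\to 1/E[\tau_y^+]$ a.s.) to get that each state's occupation frequency vanishes, and sum over the finitely many states. Your detour through the null-recurrent-or-transient dichotomy is a minor repackaging of the same ingredients, not a different argument.
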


To establish these results, we first show that the strategies are expected to move away from the set of Nash equilibria (Theorem \ref{thm:Drift} in Appendix \ref{app:BoundaryProofs}).  
Specifically, we show that, in expectation, that the Fenchel-coupling in the dual-space of payoff vectors is increasing. 
Next, in Section \ref{sec:dual}, we introduce a Markov chain to describe the behavior of (\ref{eqn:S.FTRL}) and show that it is irreducible (Theorem \ref{thm:Irreducible}).
Along with Theorem \ref{thm:Drift}, Theorems \ref{thm:returntime} and \ref{thm:boundary} then follow readily from well known results in Markov theory. 
The full details can be found in Appendix \ref{app:BoundaryProofs}. 

Theorem \ref{thm:boundary} implies that agents converge to strategies on the boundary of ${\cal X}$, i.e., there is almost always an agent $i$ playing a strategy $s_i \in {\cal S}_i$ with a probability close to 0. 
Thus, despite agents strategies having a time-average convergence to the set of approximate Nash equilibria, the strategies are actually repelled from the set of Nash equilibria and agents select extreme strategies.

We also show that strategies converge to the boundary when there is not a fully-mixed Nash equilibrium in 2-agent variants of (\ref{eqn:S.MWU}). 
Like Theorem \ref{thm:boundary}, Theorem \ref{thm:noninterior} shows that agents will rarely play fully-mixed strategies.

\begin{restatable}[]{theorem}{Noninterior}\label{thm:noninterior}
	For almost every 2-agent zero-sum game with a unique Nash equilibrium on the boundary, there exists an $\eta_0$ such that for all $\eta < \eta_0$ agent strategies will converge to the boundary with probability 1 when agents use (\ref{eqn:S.MWU}).
\end{restatable}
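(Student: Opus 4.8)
The plan is to use the Kullback--Leibler divergence to the boundary equilibrium itself as a Lyapunov function and to show that, for small enough $\eta$, (\ref{eqn:S.MWU}) drifts strictly toward $x^*$ --- and hence toward the boundary --- on every compact interior set. Write $A=A^{(12)}$, $x^*=(p^*,q^*)$, and $v=\langle p^*,Aq^*\rangle$ for the value. The first step is a genericity reduction: for almost every payoff matrix the game is non-degenerate, so its unique Nash satisfies strict complementary slackness. Concretely, every pure strategy outside the support of $x^*$ is a strictly suboptimal response against the equilibrium, and the degenerate matrices (ties in the best-response structure, degenerate optimal bases of the minimax LP) form a measure-zero algebraic set, which is exactly what ``almost every'' excludes. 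The structural consequence I will extract is a uniform gap: since $x^*$ lies on the boundary, at least one player, say player $1$, uses a proper subset of strategies, so $\langle p,Aq^*\rangle<v$ whenever $p$ has full support, while $\langle p^*,Aq\rangle\ge v$ always; on any compact interior set $K_\epsilon=\{x:\min_{i,s_i}x_{is_i}\ge\epsilon\}$ this yields $-\langle p^*,Aq\rangle+\langle p,Aq^*\rangle\le-c_\epsilon<0$.

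Next I would compute the expected one-step change of $V^t:=D_{KL}(x^*\,\|\,X^t)$, which is finite for all $t$ because each $X^t$ is fully mixed and the terms with $x^*_{is_i}=0$ drop out. Using the multiplicative update and expanding the log-partition function to second order, a direct calculation gives $E[V^{t+1}-V^t\mid X^t]=\eta\big(-\langle p^*,AX_2^t\rangle+\langle X_1^t,Aq^*\rangle\big)+O(\eta^2)$, where the $O(\eta^2)$ term is bounded uniformly in the state because the entries of $A$ are bounded. The first-order term is precisely the dissipation of the continuous-time FTRL/MWU vector field, and by the gap above it is at most $-c_\epsilon$ on $K_\epsilon$. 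Hence there is $\eta_0=\eta_0(\epsilon)$ so that for $\eta<\eta_0$ the drift is at most $-c_\epsilon\eta/2<0$ throughout $K_\epsilon$, i.e.\ $V^t$ is a strict supermartingale while the strategies remain in $K_\epsilon$.

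Finally I would convert this local negative drift into almost-sure convergence to $\partial{\cal X}$. Since $V^t\ge0$ and its expected decrease is bounded away from zero on $K_\epsilon$, a Foster--Lyapunov/supermartingale argument shows that the chain is transient relative to $K_\epsilon$: the total expected decrease accumulated during visits to $K_\epsilon$ is at most $V^0<\infty$, so the strategies can lie in $K_\epsilon$ only finitely often almost surely. Taking $\epsilon=1/n$ and intersecting over $n$ then gives $\min_{i,s_i}X^t_{is_i}\to0$ almost surely, which is the claimed convergence to the boundary.

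I expect the main obstacle to be this last step. The potential $V^t=D_{KL}(x^*\,\|\,X^t)$ is only a supermartingale in the interior: near the boundary the first-order drift degenerates while the $O(\eta^2)$ stochastic term can be positive, so $V^t$ may increase along near-boundary excursions and the naive ``$V^t$ plus occupation time'' martingale fails. The delicate part is ruling out that the strategies oscillate back into $K_\epsilon$ infinitely often --- this needs a genuine transience criterion (a modified, globally monotone potential, or an exit-time/excursion estimate away from $K_\epsilon$), and it is where the smallness of $\eta$ is used a second time, to guarantee the outward order-$\eta$ push dominates the order-$\eta^2$ fluctuations uniformly. A secondary technical point is making the genericity and strict-complementarity reduction fully rigorous as a measure statement on the space of payoff matrices.
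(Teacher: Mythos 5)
Your drift computation is correct as far as it goes: for the boundary equilibrium $x^*=(p^*,q^*)$ one indeed gets $E[V^{t+1}-V^t\mid X^t]=\eta\bigl(\langle X_1^t,Aq^*\rangle-\langle p^*,AX_2^t\rangle\bigr)+\ln Z_1+\ln Z_2$, the first-order term is $\le -\eta\,\delta\epsilon$ on $K_\epsilon$ under strict complementarity, and the log-partition remainder is $O(\eta^2)$. But the gap you flag at the end is not a technicality --- it is fatal to this route as stated, for two intertwined reasons. First, the remainder $\ln Z_1+\ln Z_2$ is not merely $O(\eta^2)$ in magnitude; it is \emph{nonnegative} and of order $\eta^2\cdot\mathrm{Var}$, and it does not vanish near $x^*$ (which is a mixed point of the boundary). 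Consequently $V^t=D_{KL}(x^*\|X^t)$ has strictly \emph{positive} expected drift in an interior neighborhood of $x^*$ --- this is exactly the expansiveness the paper proves in Theorem \ref{thm:Drift} via Jensen's inequality, and it applies to the KL divergence to \emph{any} equilibrium, including $x^*$ itself. Second, your supermartingale region is $K_\epsilon$ only for $\epsilon\gtrsim\eta$: the contraction $-\eta\delta\epsilon$ must beat the expansion $+C\eta^2$, so $\eta_0=\eta_0(\epsilon)\to 0$ as $\epsilon\to 0$. The theorem's quantifiers run the other way --- a single $\eta_0$ must give $\min_{i,s_i}X^t_{is_i}\to 0$, i.e.\ transience of $K_{1/n}$ for \emph{every} $n$ at that fixed $\eta$. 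For $1/n\ll\eta$ the set $K_{1/n}$ contains points where the drift of $V$ is positive, so no choice of $\eta$ rescues the intersection step; at best you obtain convergence to an $O(\eta)$-neighborhood of the boundary, which is strictly weaker than the claim. The ``modified, globally monotone potential'' you mention is precisely the missing idea, and I do not see how to build one from $D_{KL}(x^*\|\cdot)$.

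The paper takes a different and, for this statement, more robust route that sidesteps drift analysis entirely. It invokes the standard no-regret guarantee: for $\eta<\eta_0$ the time-average $\bar X^T$ converges with probability $1$ to the set of $O(e^{\eta}-1)$-Nash equilibria, hence (by uniqueness) into a small neighborhood $B$ of $x^*$. By \cite[Lemma C.3]{GeorgiosSODA18} there is a non-essential pure strategy $s_1$ with $\langle e_{s_1},Ax_2^*\rangle=\delta<0=\langle e_{s_1'},Ax_2^*\rangle$ for essential $s_1'$, and $B$ is chosen so these payoffs stay separated by $|\delta|/3$. The closed form of MWU then gives $X^T_{1s_1}/X^T_{1s_1'}=\exp\bigl(\eta(Y^T_{1s_1}-Y^T_{1s_1'})\bigr)$ with exponent $\approx\eta T\langle e_{s_1}-e_{s_1'},A\bar X_2^T\rangle\le \eta T\delta/3\to-\infty$, so $X^T_{1s_1}\to 0$ almost surely. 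Note this argument works directly with \emph{cumulative} payoffs, where the unfavorable $O(\eta^2)$ per-step terms never appear, and it delivers genuine convergence of a coordinate to zero rather than escape from a fixed compact set. If you want to keep a Lyapunov flavor, you would need to apply it to a quantity (like the cumulative payoff gap of a non-essential strategy) whose favorable drift does not degenerate at the boundary.
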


The proof of Theorem \ref{thm:noninterior} follows similarly to the case of (\ref{eqn:D.MWU}) from \cite{BaileyEC18} and is deferred to Appendix \ref{app:Noninterior}. 
We remark that this result likely extends to all of (\ref{eqn:S.FTRL}), with arbitrary learning rates, and with multiple agents. 
However, several new techniques need to be developed in order to extend much of the analysis involving non-interior Nash to more general settings. 

Together, Theorems \ref{thm:boundary} and \ref{thm:noninterior} imply that (\ref{eqn:S.MWU}) converges to the boundary in every non-trivial 2-agent zero-sum game.

\subsection{Constructing a Markov Chain in the Dual-space of Payoff Vectors}\label{sec:dual}

We begin by constructing the state space for the underlying Markov chain in the dual-space of the agent payoff vectors used in (\ref{eqn:S.FTRL}). 
\begin{equation}\tag{States Reachable After $t$ Iterations}
\begin{aligned}
&{\cal Y}^0 = y_i^0 \\
&{\cal Y}^t= \bigcup_{y\in {\cal Y}^{t-1}} \bigcup_{{s\in {\cal S}}} (y_1 + \sum_{j\neq 1}A^{(1j)}e_{s_j}, \cdots, y_{N} + \sum_{j\neq {N}}A^{({N}j)}e_{s_j} )
\end{aligned}
\end{equation}
In this definition, ${\cal Y}^t$ denotes the possible payoff vectors for all agents after $t$ iterations of (\ref{eqn:S.FTRL}). 
In particular, if agent $i$ has the payoff vector $y_i^{t-1}$ in iteration $t-1$, and if the agents randomly select the pure strategies $s$ in iteration $t-1$, then agent $i$'s payoff vector in iteration $t$ will be $y_i^t=y_i^{t-1} + \sum_{j\neq i}A^{(ij)}e_{s_j}$.
This yields the following transition probabilities:
\begin{align*}
\bar{P}(y^{t-1},Y^t=y^t)&= \sum_{{s\in S:} \atop {y_i^t ={y}_{i}^{t-1} +\sum_{j\neq i} A^{(ij)}s_j} \ \forall_{i=1}^N} \prod_{j=1}^N x_{js_j}^{t-1}\tag{Transition Kernel in Dual-space}
\end{align*}
where ${x}_i^{t-1} = \argmax_{x_i \in \mathcal{X}} \left\{\left<y_i^{t-1},x_i\right>-\frac{h_i(x_i)}{\eta_i}\right\}$ is the realization of $X_i^{t-1}$ as given in the definition of (\ref{eqn:S.FTRL}).
With this definition, the probability of going from state $y_i^{t-1}$ to state $y_i^t$ is 0 for most states.  The probability is positive only if there is a set of realizable strategies $s\in S$ such that $y_i^t= y_i^{t-1}+\sum_{i\neq j} A^{(ij)}s_j$ for all $i=1,...,N$.
We also remark that we can normalize each $Y$ so that $Y_{i1}=0$ for each agent without changing the proof of irreduciblity. 
Moreover, this normalization will be useful later for establishing a bijection between the primal and dual-spaces. 

\begin{restatable}[]{theorem}{Irreducible}
	\label{thm:Irreducible}
	Let $\Gamma( \cal{N}, S, A)$ be any network zero-sum game with a rational fully-mixed Nash equilibrium $x^*$, such that $\sum_{j \neq i} A^{(ij)}x^*_j= {\bf 0}$ for all agents $i$ and $j$.  
	If the regularizer used in (\ref{eqn:S.FTRL}) satisfies $\nabla h_i^*(y_i)>{\bf 0}$ for all $y\in {\cal Y}$ and $i\in {\cal N}$, then the Markov Chain with state space ${\cal Y}$ and transition probabilities $\bar{P}$ is irreducible. 
\end{restatable}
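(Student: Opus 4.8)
The plan is to exploit the purely additive structure of the one-step dynamics. Because the hypothesis $\nabla h_i^*(y_i) > \mathbf{0}$ forces every pure-strategy profile $s \in {\cal S}$ to be played with strictly positive probability at every state, the transition probability $\bar{P}(y,y')$ is positive exactly when $y' = y + \Delta(s)$ for some profile $s$, where I write $\Delta(s) = (\sum_{j\neq 1}A^{(1j)}e_{s_j},\dots,\sum_{j\neq N}A^{(Nj)}e_{s_j})$ for the increment associated with $s$. Consequently, reachability in the chain is governed entirely by which vectors can be written as nonnegative-integer combinations of the finitely many increments $\{\Delta(s)\}_{s\in{\cal S}}$. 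Every state of ${\cal Y}$ is, by construction, of the form $y^0 + \sum_s c_s\Delta(s)$ with $c_s\in\mathbb{Z}_{\geq 0}$, and each such state is reached from $y^0$ with positive probability. It therefore suffices to show that $y^0$ is reachable from every state $y\in{\cal Y}$: irreducibility then follows because $y \to y^0 \to y'$ is a positive-probability path for any pair $y,y'\in{\cal Y}$.

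First I would construct a \emph{zero cycle}: a multiset of profiles whose net increment vanishes and which uses every profile a strictly positive number of times. Using the rational fully-mixed Nash equilibrium $x^*$, set $n_s = M\prod_{j} x^*_{js_j}$, where $M$ is a common denominator chosen so that each $n_s$ is a positive integer (positivity of $n_s$ uses that $x^*$ is fully mixed, integrality uses that it is rational). A short computation with the product form of the profile probabilities gives $\sum_s n_s\, e_{s_j} = M x_j^*$ for every $j$, so the $i$-th block of $\sum_s n_s\Delta(s)$ equals $M\sum_{j\neq i}A^{(ij)}x_j^*$, which is $\mathbf{0}$ by the equilibrium identity $\sum_{j\neq i}A^{(ij)}x_j^* = \mathbf{0}$. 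Hence $\sum_s n_s\Delta(s) = \mathbf{0}$ exactly.

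Next I would use this cycle to travel from an arbitrary state $y = y^0 + \sum_s c_s\Delta(s)$ back to $y^0$. Choosing an integer $K$ large enough that $Kn_s \geq c_s$ for every $s$, I play each profile $s$ exactly $Kn_s - c_s \geq 0$ further times starting from $y$; since increments commute, the resulting state is $y + \sum_s (Kn_s - c_s)\Delta(s) = y^0 + \sum_s Kn_s\Delta(s) = y^0$. Every step of this finite play has positive probability, so $y \to y^0$ with positive probability. Combined with the reachability of every state from $y^0$, all states communicate and the chain is irreducible; the normalization $Y_{i1}=0$ is harmless here, as it is a linear map sending the exact null combination to itself.

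I expect the main obstacle to be conceptual rather than computational. A priori the reachability relation is only a forward monoid, since increments can be added but never subtracted, and for a generic game this would yield a transient, non-irreducible chain. The entire content of the theorem is that a fully-mixed rational Nash equilibrium supplies a strictly positive integer null combination of the increments, and it is precisely the \emph{strict} positivity of the weights $n_s$ that allows me to absorb the arbitrary prior coefficients $c_s$ by inflating the cycle by the factor $K$. The delicate points are thus verifying that both hypotheses are used essentially -- full mixedness for $n_s>0$ and rationality for $n_s\in\mathbb{Z}$ -- and that the equilibrium identity produces an exact, not merely approximate, cancellation.
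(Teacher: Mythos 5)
Your proof is correct, and it rests on exactly the same three pillars as the paper's: full mixedness gives strictly positive weights, rationality gives integrality, and the equilibrium identity $\sum_{j\neq i}A^{(ij)}x_j^*=\mathbf{0}$ gives exact cancellation. The forward direction (every state in ${\cal Y}$ is reachable from $y^0$ because $\nabla h_i^*>\mathbf{0}$ makes every profile playable) is identical. Where you differ is in the return direction: the paper reverses the walk \emph{one step at a time}, attaching to each original transition $y\to y+\Delta(\bar s)$ a $(b-1)$-step gadget in which agent $i$ plays $\bar s_i$ a total of $c_{i\bar s_i}-1$ times and each other $s_i$ a total of $c_{is_i}$ times (with $x^*_{is_i}=c_{is_i}/b$), so that the combined $b$ steps telescope to $b\sum_{j\neq i}A^{(ij)}x_j^*=\mathbf{0}$; induction then walks all the way back to $y^0$ in $t(b-1)$ further iterations. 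You instead build a single \emph{global} zero cycle with product-form weights $n_s=M\prod_j x^*_{js_j}$, verify $\sum_s n_s\Delta(s)=\mathbf{0}$ via the marginalization $\sum_s n_s e_{s_j}=Mx_j^*$, and absorb the accumulated profile counts $c_s$ in one shot by inflating the cycle by a factor $K$ with $Kn_s\geq c_s$. The two constructions buy slightly different things: the paper's per-step reversal needs only the per-agent marginal counts $c_{is_i}$ and yields an explicit return time, while your version is cleaner algebraically (one exact null combination, commutativity of increments does the rest) and makes transparent why strict positivity of the weights is the crux — without it the reachability monoid would be genuinely one-directional. Your closing remark about the $Y_{i1}=0$ normalization being a linear image of the same null combination also matches the paper's handling of (Stochastic FTRL 2).
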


The condition $0=\sum_{i\neq j} A^{(ij)}x^*_j$ can be assumed without loss of generality by adding constants to each agents' payoff matrices.  
Since strategies are probability vectors, this will shift agent payouts by a constant and make no difference to the strategies realized by (\ref{eqn:S.FTRL}). 

Irreducibilty requires that each state can be reached from any other state after some number of steps. 
In our proof, we show that every state can be reached from $y^0$ and that every state can return to $y^0$.  
The first part follows by construction of the state space and since $X_i^t=\nabla h_i^*(Y_i^t)>{\bf 0}$ implies any pure strategy can be played in any iteration. 
Denote ${\cal Y}^t$ as the set of states that can be reached after $t$ iterations. 
The second part relies on the rationality of $x_{is_i}^*:=c_{is_i}/b$  where $c_{is_i},b\in \mathbb{Z}_{>0}$, and $0=\sum_{i\neq j} A^{(ij)}x^*_j=\sum_{i\neq j}  A^{(ij)}c_{i}/b$ to construct a sequence of $b-1$ realized strategies to move from a state in ${\cal Y}^t$ to a state in ${\cal Y}^{t-1}$. 
Inductively, this implies every state can eventually reach $y^0$ and therefore the Markov chain is irreducible.  The proof is
in Appendix \ref{app:Irreducible}.

The irreducibilty of this Markov chain allows us to use several theorems from Markov theory to assist in proving Theorems \ref{thm:returntime} and \ref{thm:boundary}.

\section{Pure Strategies Almost Always}\label{sec:corners}

The results of the previous section show that the behavior of (\ref{eqn:S.FTRL}) is similar to the behavior of (\ref{eqn:D.FTRL}) shown in \cite{BaileyEC18}; agent strategies converge to the boundary of ${\cal X}$ in both cases. 
However, in this section, we show a significantly stronger result; 
agent strategies gravitate toward the extreme points (pure strategies) of ${\cal X}$.

\begin{restatable}[]{theorem}{Convergence}\label{thm:Convergence}
	Let $\Gamma({\cal N}, {\cal S}, A)$ be any 2-agent zero-sum game where every element of $A=A^{(12)}$ is unique and let
	$\bar{X}^T=\sum_{t=1}^T X^t/T$ where $X^t$ is generated according to  (\ref{eqn:S.MWU}). 
	There exists a $\eta_0$ such that for all $\eta<\eta_0$,
	\begin{enumerate}
		\item the sequence $\{\bar{X}^t\}_{t=1}^\infty$ has a convergent subsequence.
		\item  Every convergent subsequence of $\{\bar{X}^t\}_{t=1}^\infty$ converges to a mixture of pure strategies. 
	\end{enumerate}
\end{restatable}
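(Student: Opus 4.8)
The first claim is immediate and I would dispatch it in one line: each $\bar X^t$ lies in the product of simplices ${\cal X} = {\cal X}_1\times{\cal X}_2$, which is compact, so along any sample path Bolzano--Weierstrass yields a convergent subsequence. The substance is the second claim, which I would reduce to a statement about the invariant measures of the Feller chain of Theorem~\ref{thm:Feller}. Writing $\mu_T = \frac1T\sum_{t=1}^T \delta_{X^t}$ for the empirical occupation measures, observe that $\bar X^T = \int_{\cal X} x\,\mu_T(dx)$, so the barycenter map $\mu\mapsto\int x\,\mu(dx)$ is weak-$*$ continuous and lifts the problem to the level of measures. If every invariant probability measure of the chain is supported on the finite set of pure-strategy profiles $V = \{(e_{s_1},e_{s_2})\}$, then its barycenter is a convex combination of pure profiles, i.e.\ a mixture of pure strategies, and the theorem follows.

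To make this reduction rigorous I would invoke the standard fact (the step the authors call ``known results in Markov theory'') that for a Feller chain on a compact metric space the family $\{\mu_T\}$ is almost surely tight and every weak-$*$ limit point is an invariant measure. Then, given any convergent subsequence $\bar X^{T_k}\to\ell$, I pass to a further subsequence along which $\mu_{T_{k_m}}\Rightarrow\mu$ for some invariant $\mu$; continuity of the barycenter gives $\ell = \int x\,\mu(dx)$, and if $\mu$ is supported on $V$ this exhibits $\ell$ as a mixture of pure strategies. This handles the ``every convergent subsequence'' quantifier and shows that Part~2 rests entirely on Theorem~\ref{thm:Stationary}: every invariant measure charges only pure profiles.

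The engine behind Theorem~\ref{thm:Stationary} is a drift computation, which I would set up as follows. Let $V(x) = D_{KL}(x^*\|x)$ for the fully-mixed Nash $x^*$, shifted so that $\sum_{j\neq i}A^{(ij)}x_j^* = \mathbf 0$, and let $\Delta(x) = \mathbb E[V(X^1) - V(X^0)\mid X^0 = x]$ be the expected one-step change of the Kullback--Leibler divergence. Using the (\ref{eqn:S.MWU}) update, a direct calculation gives $\Delta(x) = \sum_i \mathbb E_{s_j\sim x_j}\bigl[\ln \sum_{s_i} x_{is_i}\,e^{\eta\, e_{s_i}^\intercal A^{(ij)}e_{s_j}}\bigr]$, since the terms $\eta\langle x_i^*, A^{(ij)}e_{s_j}\rangle$ vanish under the shifted Nash condition ($A x_2^*=\mathbf 0$ and $A^\intercal x_1^*=\mathbf 0$). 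Expanding the log-moment-generating function in $\eta$, the first-order contribution is $\eta\sum_i\langle x_i,A^{(ij)}x_j\rangle$, which is identically zero by the zero-sum identity $A^{(21)}=-[A^{(12)}]^\intercal$, so the leading term is the second-order piece $\tfrac{\eta^2}{2}\sum_i \mathbb E_{s_j\sim x_j}\bigl[\operatorname{Var}_{s_i\sim x_i}(e_{s_i}^\intercal A^{(ij)}e_{s_j})\bigr]\ge 0$ (matching the sign of Theorem~\ref{thm:Drift}). This variance is zero exactly when each $x_i$ is supported on strategies that are payoff-equivalent against the realized $s_j$; the hypothesis that every entry of $A$ is distinct forces these payoffs to be pairwise distinct, so the variance vanishes iff each $x_i$ is pure, i.e.\ iff $x\in V$. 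Hence for all $\eta$ below a threshold $\eta_0$ one obtains $\Delta(x)\ge 0$ with equality precisely on $V$.

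The main obstacle, as I see it, is converting this pointwise strict drift into the measure statement $\mu(V)=1$ for every invariant $\mu$. Formally one wants $\int\Delta\,d\mu = 0$, after which $\Delta\ge0$ together with $\{\Delta=0\}=V$ forces $\mu(V)=1$; but $V$ is unbounded and in fact infinite on the very set $V$ we are targeting, so one cannot simply write $\int(\mathcal L V)\,d\mu = 0$. My plan is to exploit that, although $V$ is unbounded, its one-step increment $V(X^{t+1})-V(X^t)$ is uniformly bounded (by $2\eta\|A\|_\infty$, since the payoff vector moves by at most $\eta A^{(ij)}e_{s_j}$). This gives the decomposition $V(X^{T+1}) = V(X^1) + \sum_{t=1}^T\Delta(X^t) + M_T$ with $M_T$ a martingale of bounded increments, so $M_T/T\to 0$ almost surely. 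Since $V\ge0$ and $V(X^1)<\infty$, proving $V(X^T)/T\to 0$ becomes equivalent to $\frac1T\sum_{t\le T}\Delta(X^t)\to 0$, i.e.\ to $\int\Delta\,d\mu=0$ for every empirical limit $\mu$. I would close this last gap using the recurrent random-walk structure of the payoff vectors together with the boundary-repulsion estimates of Section~\ref{sec:converge}, which should force the strategies to spend a vanishing fraction of time in the mixed region where $\Delta$ is bounded away from $0$. I expect this sublinearity of $\sum_t\Delta(X^t)$ to be the genuinely delicate point, since the telescoping identity by itself is consistent with a spurious linear growth of $V$ and does not, on its own, rule out an invariant measure with interior support.
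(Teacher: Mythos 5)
Your reduction of the theorem to the structure of the invariant measures of the Feller chain is the same skeleton the paper uses (Theorems \ref{thm:Feller} and \ref{thm:Stationary} combined with Theorem 12.3.2 of \cite{douc2018markov}), but the proof does not close, and you say so yourself. The entire weight of Part 2 rests on the claim that every invariant measure charges only pure-strategy profiles, and your drift argument delivers only the pointwise statement $\Delta(x)\ge 0$ with equality exactly on the vertex set $V$; you explicitly concede that you cannot convert this into $\int\Delta\,d\mu=0$ for an invariant $\mu$, because $D_{KL}(x^*\Vert\cdot)$ blows up precisely on the set where $\mu$ is supposed to live and the telescoping identity is consistent with linear growth of the Lyapunov function. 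That conversion is not a routine detail: it is the content of Theorem \ref{thm:Stationary}, and the paper does not obtain it from a single global drift function. It decomposes ${\cal X}$ into the relative interiors of its faces, observes that the chain preserves each face, and argues face by face: on a face whose induced game has an interior Nash it applies the drift of Theorem \ref{thm:Drift} to the regularizer restricted to that face, and on a face whose induced game has only a boundary Nash it invokes Theorem \ref{thm:noninterior}, a separate non-drift argument based on time-average convergence to approximate equilibria and the divergence of the ratio $X_{is_1}^T/X_{is_1'}^T$ for non-essential strategies. Your proposal has no substitute for this second ingredient, and it is unavoidable: under the stated hypotheses (distinct entries of $A$) there need not exist any fully mixed Nash equilibrium at all --- e.g.\ a $2\times 2$ game with a dominant strategy --- in which case the point $x^*$ around which your Lyapunov function is built does not exist and the expansion of $\Delta$ acquires nonvanishing first-order terms.

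Two smaller points. Your Part 1 via compactness is fine. But be careful with the barycenter reduction: every point of ${\cal X}$ is trivially a convex combination of vertices, so the conclusion ``$\ell=\int x\,\mu(dx)$ with $\mu$ supported on $V$'' carries no information beyond $\ell\in{\cal X}$. The substantive statement the paper proves is the measure-level one --- the Ces\`aro averages of the laws converge weakly, along subsequences, to distributions supported on the pure profiles --- and your argument should retain that statement rather than project it down to the barycenter, where Part 2 becomes vacuous.
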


Theorem \ref{thm:Convergence} relies on Theorem \ref{thm:noninterior} in that it requires agent strategies to converge to the boundary even when there is not an interior Nash equilibrium. As mentioned in Section \ref{sec:converge}, Theorem \ref{thm:noninterior} likely extends for arbitrary learning rates and for network zero-sum games.  
Once this generalization is made, Theorem \ref{thm:Convergence} also immediately extends for arbitrary learning rates and for network zero-sum games.  

Further, almost every matrix satisfies the requirement that the elements of $A$ are unique.  
The requirement that the elements of $A$ are unique ensure that the game induced on every face of ${\cal X}$ will be non-trivial,
i.e., given $\bar{\cal S}_i\subseteq {\cal S}_i$, the game played on the face $\bar{\cal X}=\{x\in {\cal X}: x_{is_i}=0 \ \forall s_i\in \bar{\cal S}_i \ \forall i=1,...N\}$ will be non-trivial. 
Without this requirement, it is possible that a convergent subsequence of $\{\bar{X}^t\}_{t=1}^\infty$ converges to the interior of a face that induces a trivial game. 
Finally, we remark that requirement that $A$ has unique elements can be weakened to ``every row and every column of $A$ have distinct elements'' thereby allowing payoff matrices such as $A=\left[\begin{array}{r r} 1 & -1 \\ -1 & 1 \end{array}\right]$. 

The proof of Theorem \ref{thm:Convergence} follows by first constructing a new Markov chain in the primal space and showing that this new Markov chain forms a Feller chain. 
The construction of this Feller chain is given in Section \ref{sec:primal}. 
Feller chains have interesting properties in that they always have a convergent subsequence as in the statement of Theorem \ref{thm:Convergence}. 
Moreover, every convergent subsequence of a Feller chain must converge to an stationary distribution (see e.g., Theorem 12.3.2 in \cite{douc2018markov}).
Using Theorems \ref{thm:boundary} and \ref{thm:noninterior}, strategies in the interior of ${\cal X}$, and in the interior of any $d$-dimensional face ${\cal X}$ where $d\geq 2$, will converge to their respective boundaries implying the only stationary distribution will be a mixture of pure strategies. 
The proof of Theorem \ref{thm:Convergence} is given in full detail in Appendix \ref{app:convergence}.

\subsection{Constructing the Markov Chain in the Primal-space}\label{sec:primal}

The primal space Markov chain is mostly built from the dual-space Markov chain via $\nabla h^*: {\cal Y} \to {\cal X}$. 
In particular, for (\ref{eqn:S.MWU}), $\nabla h^*$ forms a bijection between ${\cal Y}$ and the relative interior of ${\cal X}$. This creates a natural Markov chain in the relative interior of the primal space. 
\begin{align*}
\hat{P}_{\cal X}(x^{t-1},X^t=x^t)=\bar{P}(y^{t-1},Y^t=y^t) \
\text{ where } \nabla h^*(y^{t-1})=x^{t-1} \text{ and } \nabla h^*(y^{t})=x^{t}
\end{align*}	
However, to show strategies converge to pure strategies, it will be useful to extend this Markov chain to all of ${\cal X}$, including the boundary. 
Along the boundary, there is not necessarily a unique mapping between ${\cal X}$ and ${\cal Y}$, e.g., gradient descent (\ref{eqn:S.FTRL}) with $h_i(x_i)=||x_i||^2_2/2$) has infinitely many $y_i$ that map to the same $x_i$. As a result gradient descent in the primal-space will not satisfy the Markov property. 

Instead, we build the primal space Markov chain specifically for (\ref{eqn:S.MWU}).
Let $R_{i\hat{s}_i}(s,x)= \frac{x_{i\hat{s}_i}\exp{(\eta_i\cdot  \sum_{j\neq i} e_{\hat{s}_i}A^{(ij)}x_j)}}{\sum_{\bar{s}_i\in {\cal S}_i}x_{i\bar{s}_i}\exp{(\eta_i\cdot  \sum_{j\neq i} e_{\bar{s}_i}A^{(ij)}x_j)}}$ so that $X^t=R(s,x^{t-1})$ when strategy $s$ is realized by the distribution $x^{t-1}$.
For this definition, we also set $0/0=0$. 
The probability transition kernel is then given by
\begin{align*}\label{eqn:PrimalKernel}
P(x^{t-1},X^t=x^t)= \sum_{{s \in {\cal S}:}\atop{R(s,x^{t-1})=x^t}}\prod_{i\in {\cal N}} x_{is_i}^{t-1}.\tag{Probability Transition Kernel for \ref{eqn:S.MWU}}
\end{align*}

\begin{restatable}[]{theorem}{Feller}\label{thm:Feller}
	The Markov chain $\{X^t\}_{t=0}^\infty$ updated with (\ref{eqn:PrimalKernel}) is a Feller chain when strategies are updated with  (\ref{eqn:S.MWU}) in a non-trivial 2-agent zero-sum game. 
\end{restatable}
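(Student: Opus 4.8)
The plan is to verify the Feller property of Definition \ref{def:Feller} directly, exploiting the fact that the primal kernel (\ref{eqn:PrimalKernel}) is finitely supported. From a state $x$ the chain can only move to one of the finitely many points $R(s,x)$, $s\in{\cal S}$, and the profile $s$ is realized with probability $\prod_{i\in{\cal N}} x_{is_i}$; grouping profiles that happen to share an image only redistributes mass among identical atoms, so for the purpose of integrating test functions we may write
\begin{align*}
P(x,\cdot)=\sum_{s\in{\cal S}}\Big(\prod_{i\in{\cal N}} x_{is_i}\Big)\,\delta_{R(s,x)}.
\end{align*}
This is a genuine probability measure on ${\cal B}({\cal X})$: each $R(s,x)$ is again a product of probability vectors and so lies in ${\cal X}$, and the total mass factorizes as $\prod_{i}\sum_{s_i} x_{is_i}=1$.

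Next I would recall that, by the definition of weak convergence, $P(x^{(n)},\cdot)\Rightarrow P(x,\cdot)$ as $x^{(n)}\to x$ is equivalent to $\int_{\cal X} f\,dP(x^{(n)},\cdot)\to\int_{\cal X} f\,dP(x,\cdot)$ for every bounded continuous $f:{\cal X}\to\mathbb{R}$. Because the support is finite,
\begin{align*}
\int_{\cal X} f\,dP(x,\cdot)=\sum_{s\in{\cal S}}\Big(\prod_{i\in{\cal N}} x_{is_i}\Big)\,f\big(R(s,x)\big),
\end{align*}
so it suffices to show that the right-hand side is a continuous function of $x$ on all of ${\cal X}$. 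The weights $\prod_i x_{is_i}$ are polynomials in $x$, hence continuous, and $f$ is continuous by hypothesis, so the entire problem reduces to the key lemma: each map $x\mapsto R(s,x)$ is continuous on the whole simplex, boundary included.

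For the key lemma I would show that the denominator of $R_{i\hat s_i}$ is bounded away from zero uniformly on ${\cal X}$. Each factor $\exp\big(\eta_i\sum_{j\neq i} e_{\bar s_i} A^{(ij)} x_j\big)$ is continuous, and since the entries of $A$ are bounded and $x$ ranges over the compact set ${\cal X}$, the exponents are bounded by some $M$, so these factors lie in the compact set $[e^{-M},e^{M}]\subset(0,\infty)$. Consequently the denominator satisfies $\sum_{\bar s_i} x_{i\bar s_i}\exp(\cdots)\ge e^{-M}\sum_{\bar s_i} x_{i\bar s_i}=e^{-M}>0$, so $R_{i\hat s_i}(s,\cdot)$ is a ratio of continuous functions with a strictly positive continuous denominator and is therefore continuous; in particular the convention $0/0=0$ is never actually triggered. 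Hence $x\mapsto\sum_s(\prod_i x_{is_i})f(R(s,x))$ is a finite sum of products and compositions of continuous functions and is continuous, which yields the required convergence and establishes the Feller property.

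The main obstacle is the behavior on the boundary of ${\cal X}$. There, some pure-strategy profiles become unrealizable and the effective support of $P(x,\cdot)$ collapses, so one must rule out a discontinuity as $x^{(n)}\to x$. The resolution is exactly the two facts above acting together: $R(s,\cdot)$ extends continuously to the boundary precisely because its denominator never vanishes, while the weight $\prod_i x_{is_i}$ of any profile that is not realizable at the limit $x$ tends continuously to $0$. This is the structural advantage of (\ref{eqn:S.MWU}) over, for instance, gradient descent: the exponential/multiplicative form keeps the primal update map single-valued and continuous up to the boundary, whereas (as noted before the statement) for gradient descent the primal map is not even well-defined on the boundary, so no analogous Feller chain arises.
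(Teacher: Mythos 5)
Your proof is correct and follows essentially the same route as the paper's: both verify weak convergence by integrating a bounded continuous test function against the finitely supported kernel and reducing the claim to continuity of $x\mapsto\sum_{s}\bigl(\prod_i x_{is_i}\bigr)f(R(s,x))$. Your explicit lower bound $e^{-M}$ on the denominator of $R$ (showing the $0/0$ convention is never triggered and that $R$ extends continuously to the boundary) is a detail the paper asserts without proof, so your write-up is if anything slightly more complete.
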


The proof of Theorem \ref{thm:Feller} consists of standard techniques to show convergence in distribution and is in Appendix \ref{app:Feller}.

\begin{restatable}[]{theorem}{Stationary}\label{thm:Stationary}
	Let $\Gamma({\cal N}, {\cal S}, A)$ be any 2-agent zero-sum game where every element of $A=A^{(12)}$ is unique.  
	Then $\pi$ is a stationary distribution of the Feller chain $\{X^t\}_{t=0}$ created by (\ref{eqn:S.MWU}) if and only if $\pi(x)>0$ implies $x\in {\cal X}$ is a pure strategy.
\end{restatable}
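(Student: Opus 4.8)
The plan is to prove the two directions separately, after first recording two structural facts about the update map $R$. First I would observe that pure strategies are \emph{absorbing}: if $x$ is a vertex of ${\cal X}$, say $x_i=e_{s_i}$ for each agent $i$, then in $R_{i\hat{s}_i}(s,x)$ the numerator vanishes for every $\hat{s}_i\neq s_i$ and equals the single surviving denominator term for $\hat{s}_i=s_i$, so $R(s,x)=x$ for every realized $s$ and hence $P(x,\cdot)=\delta_x$. Second, I would observe that $R$ preserves supports exactly: since the exponential factors are always strictly positive and finite, $R_{i\hat{s}_i}(s,x)>0$ if and only if $x_{i\hat{s}_i}>0$. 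Consequently, writing ${\cal X}$ as the disjoint union of the relative interiors $\mathrm{ri}(F)$ of its faces $F$ (a face being fixed by a choice of support for each agent), each $\mathrm{ri}(F)$ is invariant in one step, with $P(x,\mathrm{ri}(F))=1$ for $x\in\mathrm{ri}(F)$ and $P(x,\mathrm{ri}(F'))=0$ whenever $x\in\mathrm{ri}(F)$ and $F'\neq F$; in particular the faces do not communicate.

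The easy direction ($\Leftarrow$) is then immediate: if $\pi(x)>0$ implies $x$ pure, then $\pi=\sum_{v\text{ pure}}\pi(v)\,\delta_v$ is a finite mixture of the point masses $\delta_v$, each of which is stationary because $v$ is absorbing, so $\pi P=\pi$.

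For the hard direction ($\Rightarrow$), I would show that a stationary $\pi$ can place no mass on the relative interior of any positive-dimensional face. Suppose $\pi(\mathrm{ri}(F))>0$ for some $F$ with $\dim F\geq 1$. Using the non-communication above, for measurable $A\subseteq\mathrm{ri}(F)$ only points of $\mathrm{ri}(F)$ can reach $A$, so the normalized restriction $\pi_F(\cdot)=\pi(\,\cdot\cap\mathrm{ri}(F))/\pi(\mathrm{ri}(F))$ is a stationary probability measure for the chain run inside $\mathrm{ri}(F)$, which is exactly (\ref{eqn:S.MWU}) for the sub-game obtained by restricting $A$ to the supports defining $F$. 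Because every entry of $A$ is distinct, each such sub-game is non-trivial, so either it has a fully-mixed (relative to $F$) Nash equilibrium, in which case Theorem \ref{thm:boundary} applies, or its Nash equilibrium lies on $\partial F$, in which case Theorem \ref{thm:noninterior} applies for $\eta<\eta_0$; either way the restricted dynamics spend a vanishing proportion of time in every compact $B\subset\mathrm{ri}(F)$ almost surely. Running the chain from $\pi_F$, stationarity gives $\Pr_{\pi_F}(X^t\in B)=\pi_F(B)$ for every $t$, so the expected proportion of time spent in $B$ over $[0,T]$ is identically $\pi_F(B)$; since that proportion is bounded and tends to $0$ almost surely, bounded convergence forces $\pi_F(B)=0$. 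As $\mathrm{ri}(F)$ is $\sigma$-compact, exhausting it by such $B$ yields $\pi_F(\mathrm{ri}(F))=0$, contradicting $\pi_F(\mathrm{ri}(F))=1$. Hence $\pi$ is supported on the vertices, i.e.\ $\pi(x)>0$ implies $x$ is pure.

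The main obstacle I anticipate is this last conversion of the almost-sure, time-average conclusions of Theorems \ref{thm:boundary} and \ref{thm:noninterior} into the measure-theoretic statement that no stationary mass survives in $\mathrm{ri}(F)$, together with verifying that the hypotheses of those theorems genuinely hold for \emph{every} positive-dimensional sub-game. This is precisely where distinctness of the entries of $A$ is used, guaranteeing non-triviality of all sub-games and ruling out accumulation in the relative interior of a face inducing a trivial game, and where the two cases (interior versus boundary Nash of each sub-game) must be dispatched. The lowest-dimensional faces, where one agent is already pure and the other faces a fixed payoff vector with a strict best response forced by distinctness, converge to a vertex directly and need no separate drift argument.
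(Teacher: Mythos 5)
Your proof is correct and follows the same skeleton as the paper's: decompose ${\cal X}$ into the relative interiors of its faces, observe that vertices are absorbing and that (\ref{eqn:S.MWU}) preserves supports exactly, use distinctness of the entries of $A$ to make every positive-dimensional sub-game non-trivial, and split on whether the sub-game's Nash equilibrium lies in the relative interior of the face. The one genuine divergence is how the interior-Nash case is killed. The paper disposes of it directly with the drift result (Theorem \ref{thm:Drift}): one step of the chain strictly increases the Fenchel coupling in expectation, which is incompatible with invariance of $\pi$ restricted to the face, and this argument needs neither irreducibility nor any rationality assumption. You instead invoke Theorem \ref{thm:boundary} and convert its almost-sure vanishing-time-average conclusion into $\pi_F(B)=0$ via stationarity plus bounded convergence. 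That conversion is clean, and in one respect tighter than the paper's own wording: you test $\pi_F$ only against bounded indicators, whereas the paper tests invariance against the Bregman divergence, which is unbounded on the relative interior of the face and whose $\pi$-integrability is not addressed. The cost is that you import the hypotheses of Theorem \ref{thm:boundary}, which as stated requires a \emph{rational} fully-mixed Nash equilibrium (and its proof additionally assumes $A$ rational); Theorem \ref{thm:Stationary} imposes no such assumption, so a sub-game could have an irrational interior equilibrium and your citation would not literally apply. Either add a rationality hypothesis, or replace the appeal to Theorem \ref{thm:boundary} in that one case by the drift-versus-invariance argument, as the paper does; with that substitution your proof goes through.
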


The proof of Theorem \ref{thm:Stationary} mostly follows from Theorems \ref{thm:boundary} and \ref{thm:noninterior}.  
We first show that any mixture of pure strategies is a stationary distribution. 
Then, after a constructing a game on the faces of ${\cal X}$, the theorems suggest that strategies will converges the the boundaries of their respective faces implying the extreme points of ${\cal X}$ -- the pure strategies  -- are the only non-transient states. 
The full details of this proof can be found in Appendix \ref{app:stationary}.

{
\section{Simulations}
Theorem \ref{thm:Stationary} indicates that the distribution of agent strategies will be concentrated near extreme points, e.g., in a game of matching pennies, after enough iterations agent 1's strategy will be $x^t_1\approx (1,0)$ with probability $a$ and $x^t_1\approx(0,1)$ where $a+b\approx 1$. 
This indicates that the realized strategies will almost always be approximately a pure strategy. 
As depicted in Figure \ref{fig:MPstrat}, the realized strategies quickly diverge and become concentrated near the extreme points of the strategy space. 

\begin{figure}[!h]
	\centering
	\includegraphics[scale=0.3]{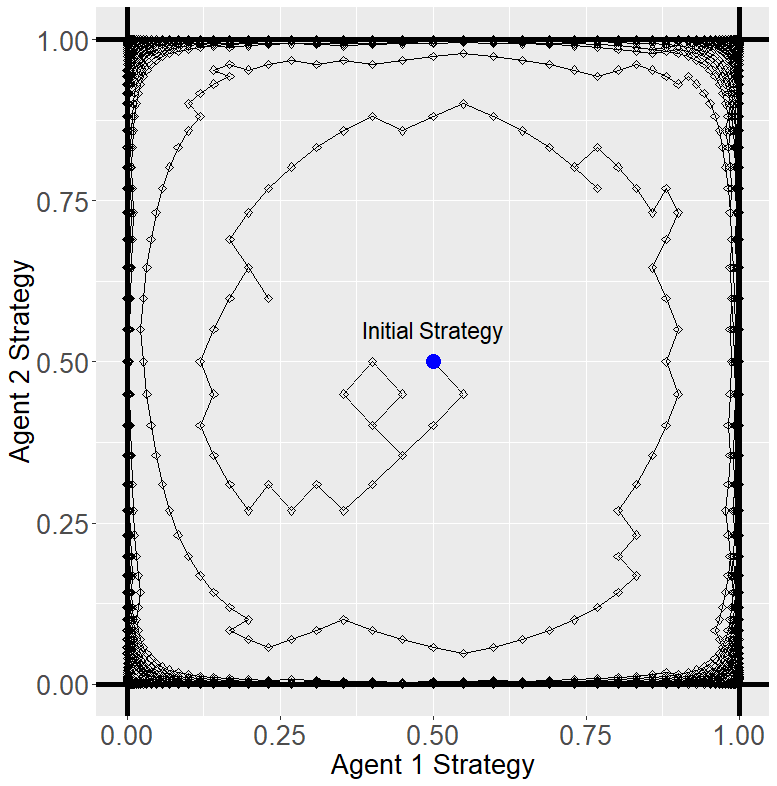}
	\caption{3000 Iterations of Stochastic MWU Applied to  Matching Pennies with $\epsilon_1=\epsilon_2=0.1$ with the Initial strategies $x_1^0=x_2^0=(0.5,0.5)$.  Despite Starting at a Nash Equilibrium, the Strategies Quickly Converge to the Boundary.}
	\label{fig:MPstrat}
\end{figure}

Interestingly, this convergence to corners actually suggests sublinear regret for the stochastic MWU algorithm with fixed learning rate:
It is well known that for MWU that agent 1's regret is bounded by $\sum_{t=0}^T \langle x_1^{t+1}-x_1^t, A x_{2}^t \rangle+O(1)$ \cite{SS11}. 
If agent 1 is stuck near the same extreme point for two consecutive iterations, then $x_1^{t+1}\approx x_1^t$ and that iteration's contribution to regret is $\langle x_1^{t+1}-x_1^t, A x_{2}^t \rangle\approx 0$, i.e., we would expect the regret to not grow in most iterations -- this specific property was exploited in \cite{BaileyNips19} to show regret grows at rate $O(\sqrt{T})$ in deterministic gradient descent with fixed learning rate in 2-agent, 2-strategy games.


Admittedly, while Theorem \ref{thm:Stationary} indicates the strategies will concentrate near extreme points, it does not directly say consecutive iterations will be near the same extreme point. 
Rather, the dynamics of MWU give us this insight:
As shown in Figure \ref{fig:ExtremePointMP10000}, the strategies go through long stretches of being within $0.1$ of the closest extreme separated by short intervals where the strategies are approximately $0.707$ away from the closest extreme point -- this corresponds to the maximum distance between the boundary and all extreme points. 
However, as shown in Figure \ref{fig:MPstrat}, the strategies are mostly moving clockwise along the boundary of the strategy space.
MWU takes time to move from one corner to another, e.g., in Figure \ref{fig:MPstrat} it consistently takes 6 iterations to move from $x_1\approx (0.375,0.625)$ to $(0.625,0.375)$.
Since, in the limit, almost all iterations are close to a pure strategy, MWU cannot switch corners often and therefore consecutive iterations are typically near the same extreme point.
As such, we expect $x^{t+1}\approx x_t$ for most iterations and that regret frequently will not grow.  
We remark that while clockwise rotations do not necessarily exist in higher dimension games, it still takes a significant number of iterations to move between extreme points and we expect for their to be long stretches of iterations where strategies are close to the same equilibrium as depicted in Figure \ref{fig:ExtremePoint10Strat} for 2-agent, 10-strategy games.

\begin{figure}[!h]
	\centering
	\def\size{0.35}
	\subfigure[Matching Pennies]{\label{fig:ExtremePointMP10000}
	    \includegraphics[scale=\size]{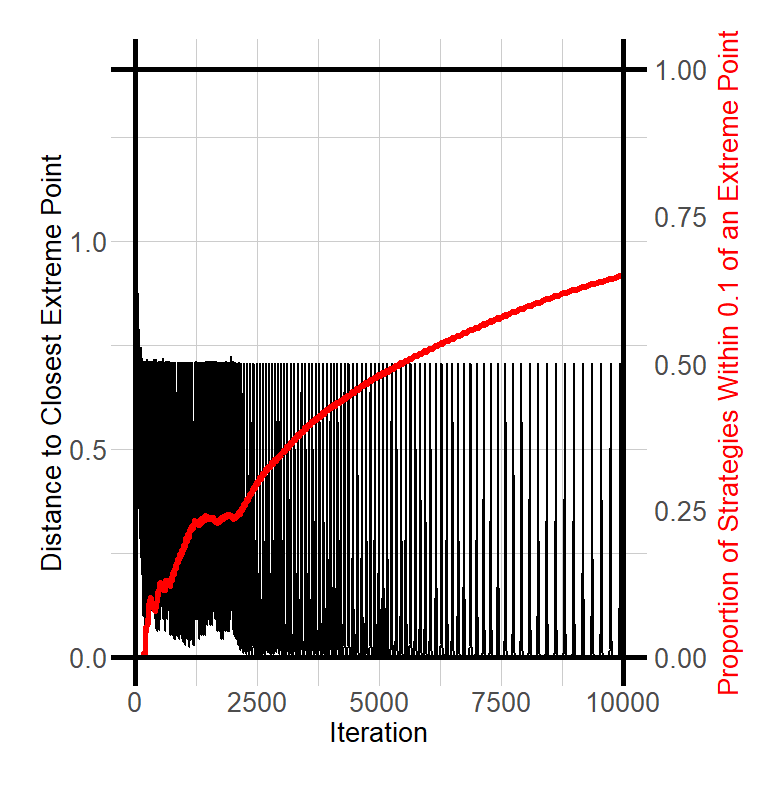}}
	\subfigure[10 Strategy Game]{\label{fig:ExtremePoint10Strat}
	    \includegraphics[scale=\size]{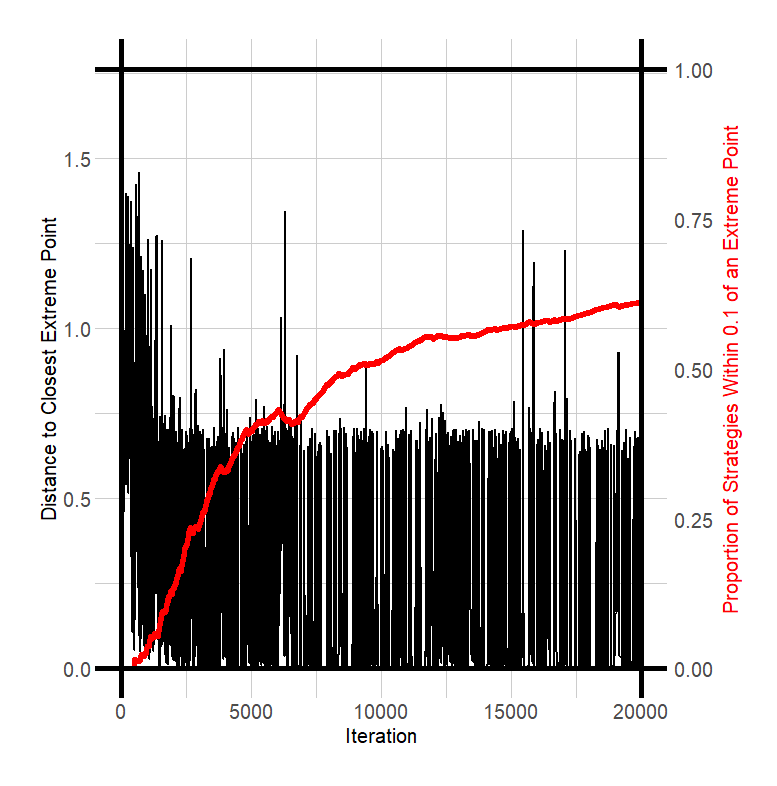}}
	    \caption{Agent's Combined Distance to the Closest Pure Strategy.}
	    \label{fig:proportion}
\end{figure}



To test this possibility of sublinear regret, we generate random games and simulate 20,000 iterations of stochastic multiplicative weights.
After storing the strategy for each iteration, we build an approximation of agent 1's regret throughout the simulation by using the model $Regret_1(t) \approx t^\alpha$.

Specifically, we estimate $\alpha$ via the linear regression $\log (R_1(t)) \approx \alpha \log (t)+\log(\beta)$.  
We repeat this process for 30 times for games with 10, 20, and 40 strategies per agent. 
Since regret tends to oscillate and is sometimes negative, we actually model $\overline{Regret}_1(T)=\max_{t\in [T]} Regret_1(t)$ -- an upper bound on agent 1's regret.  
As shown in Figure \ref{fig:regret}, the distinction is relatively small. 
R version 4.1.1 was used to complete the experiments. 
The source code is available at \href{http://jamespbailey.com/StochasticMWU/}{http://jamespbailey.com/StochasticMWU/}.

The results of the simulations are shown in Table \ref{tab:regret}.  In all 90 instances, regret appears to be growing at sublinear rate between $t^{0.3342}$ and $t^{0.7276}$ with all estimates close to $\sqrt{t}$ as shown in Figure \ref{fig:regret}. 
While individual iterates vary from this estimate -- regret tends to oscillate and will not perfectly follow the curve $\beta\cdot t^\alpha$, -- the regression captures between 87.7\% and 99.3\% of the variability for each model. 
These experiments suggest a need for more research into understanding the connection between learning dynamics and regret.

\begin{table}[!ht]
	\centering
	\caption{Estimates for the Growth of Regret ${R}_1(t)\approx t^\alpha$ and Proportion of Variability ($R^2$) Explained by Each Model.}
	\label{tab:regret}
	\begin{tabular}{| c c c|}
		\hline
		\#Strategies & Estimate of $\alpha$ & $R^2$ \\
		\hline
		10	& $0.3342-0.6694$	& $87.7\%-99.0\%$	\\
		20	& $0.3760-0.6133$	& $93.7\%-98.8\%$	\\
		40	& $0.4536-0.7276$	& $94.3\%-99.3\%$	\\
		\hline
	\end{tabular}
\end{table}

\begin{figure}[!h]
	\centering
	\includegraphics[scale=0.3]{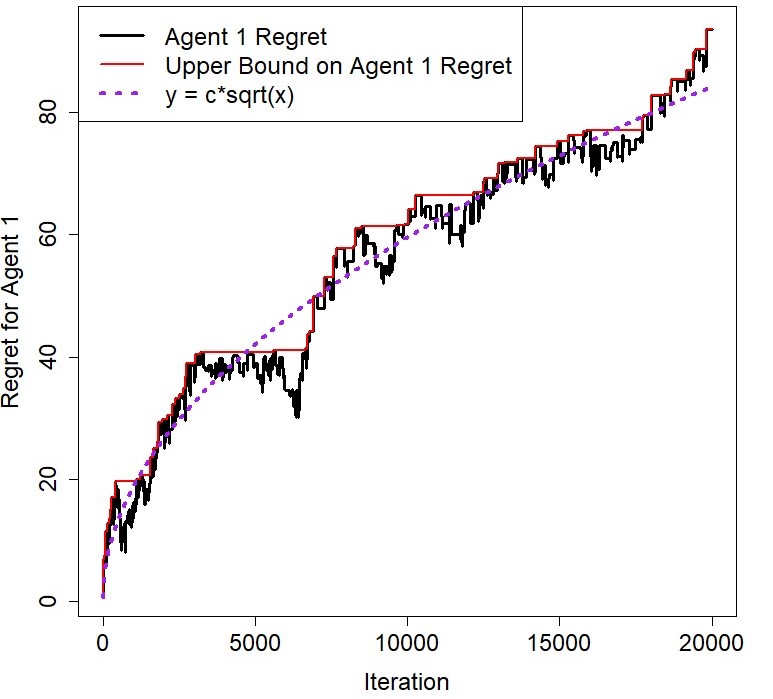}
	\caption{Approximation of Agent 1's Regret in the 1st Experiment for a Zero-Sum Game with 10 Strategies for Each Agent. As Shown by the Dashed Line, the Regret Grows Approximately at Rate $O(\sqrt{t})$.}
	\label{fig:regret}
\end{figure}

\section{Conclusion}


Our analysis of stochastic Multiplicative Weights Updates (MWU) in zero-sum games  shows that it is possible to characterize the day-to-day behavior (sometimes referred to as last-iterate behavior) of classic online algorithms even in games where they are unstable. 
%
 Our results shows that the actual realized behavior concentrates around deterministic strategy profiles, which is anthithetical to the predictions of Nash equilibrium for most prototypical zero-sum games such as Matching-Pennies or Rock-Paper-Scissors.
 Such results are clearly significantly stronger that previously known instability results or divergence to boundary results.
 
 Extending these results to other learning dynamics, games, as well as in the case of dynamically decreasing step-sizes is an interesting direction for future work.
Moreover, as we argue, these characterizations have the potential to improve optimality guarantees for the wide array of applications where MWU and FTRL dynamics are used.

}

\section{Acknowledgements}

This research/project is supported in part by the National
Research Foundation, Singapore under its AI Singapore Program (AISG Award No: AISG2-RP-2020-016), NRF2019-NRFANR095 ALIAS grant, grant PIE-SGP-AI-2018-01, NRF
2018 Fellowship NRF-NRFF2018-07 and AME Programmatic
Fund (Grant No. A20H6b0151) from the Agency for Science, Technology and Research (A*STAR).

\bibliographystyle{acm}
\bibliography{refer,ref,IEEEabrv,Bibliography}

\appendix

\section{Irreducibly of the Dual-space Markov Chain}\label{app:Irreducible}

In our proofs, we work with a slightly different dynamic where the first component of $Y_{i}^t$ is normalized to be zero for each agent. Formally, 
\begin{equation}\label{eqn:S.FTRL2}\tag{Stochastic FTRL 2}
\begin{aligned}
{Y}_{i}^t &={Y}_{i}^{t-1} +\sum_{j\neq i} A^{(ij)}s_j -\left[\sum_{j\neq i} A^{(ij)}s_j\right]_1\cdot \mathbf{1} \ \text{with  probability} \ \prod_{j\neq i} X_{js_j}^{t-1}\\
{X}_i^{t} &= \argmax_{x_i \in \mathcal{X}} \left\{\left<Y_i^{t},x_i\right>-\frac{h_i(x_i)}{\eta_i}\right\}\\
\end{aligned}
\end{equation} 
where $\mathbf{1}$ is a vector of 1's. 
We normalize the first component to clarify the dual space of payoff vectors. In (\ref{eqn:S.FTRL2}), adding a vector of constants onto $Y_i^t$ has no impact on $X_i^t$ as $Y_i^t$ is multiplied by a probability vector.  
Specifically, adding a constant vector onto $Y_i^t$ only increases $\langle Y_i^t,x_i\rangle$ by a constant and has no impact in the selection of $X_i^t$. 

In particular, this normalization makes it simpler to define the space dual to ${\cal X}_i$. 
Since ${\cal X}_i$ describes a set of probability vectors, the dimension of ${\cal X}$ is $S_i-1$ implying the dual space is also dimension $S_i-1$. However, $Y_i^t\in {\mathbb{R}}^{S_i}$ creating some ambiguity in the definition of the dual space.
However by normalizing the first component of $Y_i^t$, the dual space can be expressed simply as ${\cal Y}_i=\{y_i\in \mathbb{R}^{S_i}: y_{i1}=0\}$ without any ambiguity. 
This definition also allows us to establish a bijection between the two spaces. 

\begin{lemma}\label{lem:dual}
    If the regularlizer used in (\ref{eqn:S.FTRL2}) satisfies $\nabla h_i^*(y_i)>0$ for all $y$, then $\nabla h_i-[\nabla h_i]_1\cdot \mathbf{1}$ is a bijection between $rel.int({\cal X})$ and ${\cal Y}_i=\{y_i\in \mathbb{R}^{S_i}: y_{i1}=0\}$
\end{lemma}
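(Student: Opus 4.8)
The plan is to show that the claimed map, which I will write as $\psi_i(x_i) = \nabla h_i(x_i) - [\nabla h_i(x_i)]_1\,\mathbf{1}$, is a bijection by exhibiting an explicit two-sided inverse, namely the restriction of $\nabla h_i^*$ to $\mathcal{Y}_i$. By construction the first coordinate of $\psi_i(x_i)$ equals $[\nabla h_i(x_i)]_1 - [\nabla h_i(x_i)]_1 = 0$, so $\psi_i$ maps $rel.int(\mathcal{X}_i)$ into $\mathcal{Y}_i$; and the hypothesis $\nabla h_i^*(y_i) > \mathbf{0}$ guarantees that $\nabla h_i^*$ maps $\mathcal{Y}_i$ into $rel.int(\mathcal{X}_i)$, since $\nabla h_i^*(y_i)$ is always a probability vector. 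It therefore remains to verify that the two composites are the identity.

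The key tool is the first-order optimality characterization of the maximizing argument $x_i = \nabla h_i^*(y_i)$ over the simplex $\mathcal{X}_i$. Because $h_i$ is strictly convex and, on $rel.int(\mathcal{X}_i)$, differentiable, and because the only binding constraint in the relative interior is $\sum_{s_i} x_{is_i} = 1$ (the nonnegativity constraints being inactive), Lagrangian stationarity states that $x_i \in rel.int(\mathcal{X}_i)$ equals $\nabla h_i^*(y_i)$ if and only if $\nabla h_i(x_i) - y_i = \lambda\,\mathbf{1}$ for some scalar $\lambda$, i.e. $\nabla h_i(x_i)$ and $y_i$ differ only by a multiple of the all-ones vector. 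This is precisely the statement that $y_i$ is recovered from $\nabla h_i(x_i)$ by subtracting off a constant vector, which is the only freedom left once one quotients out the direction $\mathbf{1}$ normal to the simplex.

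I then verify both directions. Given $x_i \in rel.int(\mathcal{X}_i)$, set $y_i = \psi_i(x_i)$; then $\nabla h_i(x_i) - y_i = [\nabla h_i(x_i)]_1\,\mathbf{1}$ is a multiple of $\mathbf{1}$, so the characterization gives $\nabla h_i^*(y_i) = x_i$, whence $\nabla h_i^* \circ \psi_i = \mathrm{id}$ on $rel.int(\mathcal{X}_i)$ and $\psi_i$ is injective. Conversely, fix $y_i \in \mathcal{Y}_i$ (so $y_{i1}=0$) and set $x_i = \nabla h_i^*(y_i) \in rel.int(\mathcal{X}_i)$. By the characterization there is a scalar $\lambda$ with $\nabla h_i(x_i) - y_i = \lambda\,\mathbf{1}$; reading off the first coordinate and using $y_{i1}=0$ forces $\lambda = [\nabla h_i(x_i)]_1$, so $\psi_i(x_i) = \nabla h_i(x_i) - \lambda\,\mathbf{1} = y_i$. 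Thus $\psi_i \circ \nabla h_i^* = \mathrm{id}$ on $\mathcal{Y}_i$, giving surjectivity, and together these show $\psi_i$ is a bijection with inverse $\nabla h_i^*|_{\mathcal{Y}_i}$.

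The main obstacle is making the optimality characterization rigorous, since $h_i$ is a priori defined only on the simplex $\mathcal{X}_i$, a set of dimension $S_i - 1$, so the ambient gradient $\nabla h_i \in \mathbb{R}^{S_i}$ must be read via a smooth extension (for MWU, $h_i(x_i)=\sum_{s_i} x_{is_i}\ln x_{is_i}$ extends to the positive orthant, giving $[\nabla h_i(x_i)]_{s_i} = \ln x_{is_i}+1$ and $\psi_i(x_i)_{s_i}=\ln(x_{is_i}/x_{i1})$). One must confirm that the component of the gradient along $\mathbf{1}$ is exactly the Lagrange-multiplier ambiguity, and that the hypothesis $\nabla h_i^*(y_i) > \mathbf{0}$ is what keeps every point in the relative interior, so the nonnegativity constraints never bind and $\nabla h_i$ is well-defined throughout; it is precisely this positivity that makes $\nabla h_i^*$ a genuine two-sided inverse rather than merely a one-sided section.
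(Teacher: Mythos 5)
Your proof is correct and follows essentially the same route as the paper's: both hinge on the KKT stationarity condition for the FTRL argmax, the observation that interiority kills the nonnegativity multipliers, and the normalization $y_{i1}=0$ pinning down the simplex multiplier as $[\nabla h_i(x_i)]_1$. Your packaging via an explicit two-sided inverse ($\nabla h_i^*$ restricted to $\mathcal{Y}_i$) is in fact a cleaner way to get injectivity and surjectivity simultaneously than the paper's separate arguments.
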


\begin{proof}
	The KKT conditions of (\ref{eqn:S.FTRL2}) are given by
	\begin{align*}
		\nabla h_i(X_i^t)&= Y_i^t - \lambda_i^t\cdot \mathbf{1} + \delta_i^t\tag{Stationarity}\\
		\sum_{s_i\in {\cal S}_i} X_{is_i}^t&=1 \tag{Primal Feasibility 1}\\
		X_{is_i}^t&\geq 0 \ \forall s_i \in {\cal S}_i \tag{Primal Feasibility 2}\\
		\delta^t_{is_i} &\geq 0 \ \forall s_i \in {\cal S}_i \tag{Dual Feasibility}\\
		\langle \delta_i^t, X_i^t\rangle &= 0 \tag{Complimentary Slackness}\label{eqn:comp}
	\end{align*}
	where $\mathbf{1}$ is a vector of 1's, $\lambda_i^t$ is a dual variable associated with the constraint $\sum_{s_i\in {\cal S}_i} X_{is_i}^t=1$ and 
	$\delta^t_{is_i}$ is a dual variable associated with the constraint $X_{is_i}^t\geq 0$. 
	Since $X_i^t=\nabla h_i^*(Y_i^t)$ in the interior of ${\cal X}_i$, $X_{is_i}^t>0$ and $\delta_{is_i}=0$ by (\ref{eqn:comp}).
	Thus, $\nabla h_i(X_i^t)+ \lambda_i^t\cdot \mathbf{1}= Y_i^t $. 
	Moreover, since $Y_{i1}^t=0$, $\lambda _i^t=-[\nabla h_i(X_i^t)]$ and therefore $Y_i^t=\nabla h_i(X_i^t)-[\nabla h_i(X_i^t)]_1:=g_i(X_i^t)$. 
	Therefore $g_i$ is injective. 
	
	To see that $g_i$ is surjective, first observe that ${\cal X}_i$ is compact and therefore (\ref{eqn:S.FTRL2}) is well defined for all $Y_i\in {\cal Y}_i$ and therefore there always exists a $X_i\in {\cal X}$ such that $g(X_i)=Y_i$. 
	It remains to show that $g_i(X_i)=g_i(X'_i)$ implies $X_i=X'_i$. 
	For contradiction, suppose $g_i(X_i)=g_i(X'_i)$ for some $X_i\neq X'_i$. 
	In definition of (\ref{eqn:S.FTRL2}), $h_i$ is strictly convex and therefore so is $g_i$. 
	Thus,
	\begin{align*}g_i(X_i)&> g_i(X_i') + \langle \nabla g_i(X_i'), X_i-X'_i\rangle\\
	&> g_i(X_i) + \langle \nabla g_i(X_i), X'_i-X_i\rangle\\
	&\phantom{>}+ \langle \nabla g_i(X_i'), X_i-X'_i\rangle\\
	&=g_i(X_i)
	\end{align*}
	since $\nabla g_i(X_i)=\nabla g_i(X'_i)$ and $g_i(X_i)>g_i(X_i)$, a contradiction.
	Therefore $X_i=X'_i$ and $g_i$ is a bijection between ${\cal Y}_i$ and ${\cal X}_i$. 
\end{proof}

With this definition of (\ref{eqn:S.FTRL2}), we  update our Markov Chain accordingly:

\begin{equation}\tag{States Reachable After $t$ Iterations}
\begin{aligned}
{\cal Y}^0&= y_i^0 \\
{\cal Y}^t&= \bigcup_{y\in {\cal Y}^{t-1}} \bigcup_{{s\in {\cal S}}} \left(y_1 + \sum_{j\neq 1}A^{(1j)}e_{s_j}-\left[\sum_{j\neq 1}A^{(1j)}e_{s_j}\right]_1,\right. 
\left. \cdots, y_{N} + \sum_{j\neq {N}}A^{({N}j)}e_{s_j} -\left[\sum_{j\neq 1}A^{(Nj)}e_{s_j}\right]_1\right)
\end{aligned}
\end{equation}

\begin{align*}
\bar{P}(y^{t-1},Y^t=y^t)=\sum_{{s\in S:} \atop {y_i^t ={y}_{i}^{t-1} +\sum_{j\neq i} A^{(ij)}s_j-\left[\sum_{j\neq 1}A^{(ij)}e_{s_j}\right]_1} \ \forall_{i=1}^N} \prod_{j=1}^N x_{js_j}^{t-1}\tag{Transition Kernel in Dual-space}
\end{align*}
where ${x}_i^{t-1} = \argmax_{x_i \in \mathcal{X}} \left\{\left<y_i^{t-1},x_i\right>-\frac{h_i(x_i)}{\eta_i}\right\}$ is the realization of $X_i^{t-1}$ as given in the definition of (\ref{eqn:S.FTRL2}).

\Irreducible*

We remark that the proof of this theorem also holds for the Markov chain defined in the main body using (\ref{eqn:S.FTRL}) following the same steps. 
However, we formally show the result for the Markov chain created by (\ref{eqn:S.FTRL2}) as the duality established in Lemma \ref{lem:dual} will be important later. 

\begin{proof}
	Irreducibly requires that every state can be reached by any other state with positive probability.  It suffices to that every state can be reached from $y^0$ and then can return to $y^0$. 
	
	First, observe that every state in ${\cal Y}^t$ is reachable in $t$ steps from $y^0$ with positive probability:
	We proceed by induction and the base case $(t=0)$ holds trivially. 
	By definition of ${\cal Y}^t$, for any $y^t\in {\cal Y}^t$, there is a state $y^{t-1}\in {\cal Y}^{t-1}$ where $y^t$ is reached from $y^{t-1}$ via pure strategy $s$. 
	Since $x_i^{t-1}=\nabla h_i^*(y_i^{t-1})>0$ for all $i$, 
	By the inductive hypothesis, there is also positive probability to go from state $y^0$ to $y^{t-1}$ in $t-1$ steps. 
	Thus, there is positive probability of going from state $y^0$ to $y^{t-1}$ to $y^t$ in $t$ steps.
	
	Next, we show that for each state $y^t\in {\cal Y}^t$ that we can return to $y^0$. 
	Since $x^*$ is rational, there exists a $b\in \mathbb{Z}_{>0}$ and $c_{is_i}\in \mathbb{Z}_{>0}$ such that $x_{is_i}^*=c_{is_i}/b$ for all $s_i \in {\cal S}_i$ and $i\in {\cal N}$.
	We now show that $y^0$ can be reached in $(b-1)\cdot t$ iterations. 
	
	Let $y\in {\cal Y}^{t-1}$ and $\bar{s}\in {\cal S}$ be such that $y_i^t=y_i + \sum_{j\neq i} A^{(ij)}e_{\bar{s}_j}-\left[  \sum_{j\neq i} A^{(ij)}e_{\bar{s}_j}\right]_1$ for all $i\in {\cal N}$.  Such a pair exists by definition of ${\cal Y}^t$.

	Now consider any sample path of (\ref{eqn:S.FTRL}) where agent $i$ plays strategy $\bar{s}_i$ a total of $(c_{i\bar{s}_i}-1)$ times, and strategy $s_i$ a total of $c_{is_i}$ times for all $s_i\in {\cal S}_i\setminus\{\bar{s}_i\}$ over the next $b-1$ iterations.  Since $\nabla h_i^*(y_i)>0$ for all $y_i \in {\cal Y}_i$, there is positive probability of this path occurring. 
	Moreover, these updates return the state of the system to $y$:
	
	\begin{align*}
	y^{t+b-1}_i	&=y^t_i + \sum_{j\neq i}\left((c_{j \bar{s}_j}-1)\cdot A^{(ij)}e_{\bar{s}_j} + \sum_{s_i\in S_i\setminus\{\bar{s}_i\}}c_{j s_j}\cdot A^{(ij)}e_{s_j}  \right)\\
	& \phantom{=y_i^t} -\sum_{j\neq i}\left[\left((c_{j \bar{s}_j}-1)\cdot A^{(ij)}e_{\bar{s}_j} + \sum_{s_i\in S_i\setminus\{\bar{s}_i\}}c_{j s_j}\cdot A^{(ij)}e_{s_j}  \right)\right]_1\\
	&=y_i+ \sum_{j\neq i} A^{(ij)}e_{\bar{s}_j} + \sum_{j\neq i}\left((c_{j \bar{s}_j}-1)\cdot A^{(ij)}e_{\bar{s}_j} + \sum_{s_i\in S_i\setminus\{\bar{s}_i\}}c_{j s_j}\cdot A^{(ij)}e_{s_j}  \right) \\
	& \phantom{=y_i}-\left[\sum_{j\neq i} A^{(ij)}e_{\bar{s}_j} \right]_1 -\sum_{j\neq i}\left[\left((c_{j \bar{s}_j}-1)\cdot A^{(ij)}e_{\bar{s}_j} + \sum_{s_i\in S_i\setminus\{\bar{s}_i\}}c_{j s_j}\cdot A^{(ij)}e_{s_j}  \right)\right]_1\\
	&=y_i+ \sum_{j\neq i}\sum_{s_i\in S_i}c_{j s_j}\cdot A^{(ij)}e_{s_j}  -\sum_{j\neq i}\left[\sum_{s_i\in S_i}c_{j s_j}\cdot A^{(ij)}e_{s_j}  \right]_1\\
	&=y_i+ b\cdot \sum_{j\neq i}\cdot A^{(ij)}x^*_j  -b\cdot \sum_{j\neq i}\left[A^{(ij)}x^*_j  \right]_1=y_i 
	\end{align*}
	
	since  $\sum_{j \neq i} A^{(ij)}x^*_j= {\bf 0}$.
	
	This holds for each agent $i$ and the state $y$ can be reached from $y^t$ after $b-1$ iterations. 
	Inductively, this implies that $y^0$ can reached from $y^t$ after $t\cdot (b-1)$ iterations. 
	Thus every state can be reached from $y^0$ and then return to $y^0$ and the Markov chain is irreducible. 

\end{proof}

\section{Convergence for Interior Nash}\label{app:BoundaryProofs}

In this section, we provide the proofs of Theorems \ref{thm:returntime} and \ref{thm:boundary}.
Specifically, we make use of the Markov chain for the dual space ${\cal Y}$ given in Section \ref{sec:dual}. 
Prior to making use of the Markov chains, we show that, in expectations, (\ref{eqn:D.FTRL}) will not move toward a Nash equilibrium $x^*$. 

\subsection{Positive Drift}

Prior to showing (\ref{eqn:S.FTRL}) causes strategies to drift away from a Nash equilibrium, we first show a result relating the Bregman divergence and Fenchel-coupling.

\begin{lemma}\label{lem:FBequiv}
	Suppose $\eta_i=1$ and $x^t$ and $y^t$ are obtained from (\ref{eqn:D.FTRL}) and be such that $x^t$ is in the interior of ${\cal X}$. 
	If $x^*$ is a fully mixed Nash equilibrium, then $D_h(x^*||x^t)=F_h(x^*||y^t)$. 
\end{lemma}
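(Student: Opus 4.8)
The plan is to collapse both quantities to a common expression by exploiting the Fenchel--Young equality that holds exactly at the maximizer. First I would invoke the maximizing argument $x_i^t = \nabla h_i^*(y_i^t)$, which is valid here because $\eta_i = 1$ and $x^t$ lies in the interior of ${\cal X}$, so the supremum defining $h_i^*$ is attained precisely at $x_i^t$. This yields the exact identity $h_i^*(y_i^t) = \langle y_i^t, x_i^t\rangle - h_i(x_i^t)$ for each agent $i$.

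Substituting this into the definition of the Fenchel-coupling and collecting terms, I would rewrite
\[
F_h(x^*||y^t) = \sum_{i\in{\cal N}} \left( h_i(x_i^*) - h_i(x_i^t) - \langle y_i^t, x_i^* - x_i^t\rangle\right),
\]
which already has the shape of the Bregman divergence
\[
D_h(x^*||x^t) = \sum_{i\in{\cal N}} \left( h_i(x_i^*) - h_i(x_i^t) - \langle \nabla h_i(x_i^t), x_i^* - x_i^t\rangle\right).
\]
It therefore remains only to verify that $\langle y_i^t - \nabla h_i(x_i^t),\, x_i^* - x_i^t\rangle = 0$ for every $i$.

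The subtle part is precisely this last step, where the simplex constraint intervenes: $y_i^t$ and $\nabla h_i(x_i^t)$ need not be equal, but only equal up to a shift by the all-ones vector. Using the stationarity (KKT) condition of the maximization defining $x_i^t$ in the interior---where the nonnegativity multipliers vanish by complementary slackness, exactly as in the KKT computation used for Lemma \ref{lem:dual}---I would obtain $\nabla h_i(x_i^t) = y_i^t - \lambda_i^t\cdot\mathbf{1}$ for the scalar $\lambda_i^t$ associated with the constraint $\sum_{s_i\in{\cal S}_i} x_{is_i} = 1$. Hence $y_i^t - \nabla h_i(x_i^t) = \lambda_i^t\cdot\mathbf{1}$, and since $x_i^*$ (fully mixed, hence a genuine probability vector) and $x_i^t$ both sum to $1$, we have $\langle \mathbf{1}, x_i^* - x_i^t\rangle = 1 - 1 = 0$. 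Thus the residual inner product vanishes termwise, the two sums coincide, and $D_h(x^*||x^t) = F_h(x^*||y^t)$.

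I expect the only real obstacle to be this constant-shift bookkeeping; the fully-mixed hypothesis on $x^*$ and the interiority of $x^t$ are exactly what make both the Fenchel--Young equality sharp and the Lagrange term $\lambda_i^t\cdot\mathbf{1}$ inert. Equivalently, one can bypass the explicit KKT argument by appealing to the remark in the text that shifting $y$ by a constant leaves the Fenchel-coupling unchanged, which absorbs $\lambda_i^t\cdot\mathbf{1}$ directly and lets one take $\nabla h_i(x_i^t) = y_i^t$ without loss of generality.
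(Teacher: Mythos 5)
Your proposal is correct and follows essentially the same route as the paper's proof: both rest on the Fenchel--Young equality $h_i^*(y_i^t)=\langle y_i^t,x_i^t\rangle-h_i(x_i^t)$ at the attained maximizer together with the KKT stationarity relation $\nabla h_i(x_i^t)=y_i^t-\lambda_i^t\cdot\mathbf{1}$, and then kill the residual term via $\langle\mathbf{1},x_i^*-x_i^t\rangle=0$ because both are probability vectors. The only cosmetic difference is the direction of the computation (you expand $F_h$ toward $D_h$, the paper expands $D_h$ toward $F_h$), which changes nothing of substance.
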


\begin{proof}
	Following identically to the proof of Lemma \ref{lem:dual}, the KKT condition of (\ref{eqn:D.FTRL}) imply that $\nabla h_i(x_i^t)= y_i^t - \lambda_i^t\cdot \mathbf{1}$. 
	
	Next, recall $h_i^*(y_i^t)=\sup_{x_i \in {\cal X}_i} \left\{\langle y_i^t,x_i\rangle -h_i(x_i) \right\}$. Since $x_i^t$ is selected by (\ref{eqn:D.FTRL}), $h^*_i(y_i^t)= \langle y_i^t,x_i^t\rangle -h_i(x_i^t)$.  
	
	Finally, the Bregman divergence is given by
	\begin{align*}
		D_h(x^*||x^t)=&
		\sum_{i \in {\cal N}} \left( h_i(x_i^*)-h_i(x_i^t)-\langle \nabla h_i(x_i^t), x_i^*-x_i^t \right)\\
		=&\sum_{i \in {\cal N}} \left( h_i(x_i^*)-h_i(x_i^t)-\langle \nabla y_i^t - \lambda_i^t\cdot \mathbf{1}, x_i^*-x_i^t \rangle\right)\\
		=&\sum_{i \in {\cal N}} \left( h_i(x_i^*)+\langle y_i^t,x_i^t\rangle-h_i(x_i^t)-\langle  y_i^t , x_i^*\rangle \right)\\
		=&\sum_{i \in {\cal N}} \left( h_i(x_i^*)h_i^*(y_i^t)-\langle  y_i^t , x_i^*\rangle \right)\\
		=&F_h(x^*||y^t)
	\end{align*}
	since $\lambda_i^t \langle \mathbf{1}, x_i^*-x_i^t \rangle =0$ as both $x_i^*$ and $x_i^t$ are probability vectors. 
	Thus, the Fenchel-coupling and the Bregman divergence are equivalent. 
\end{proof}

With this equivalence, we show that the Fenchel-coupling is increasing. 

\begin{theorem}\label{thm:Drift}
	The Fenchel-coupling is increasing in expectations, i.e., $E[F_h(x^*||Y^t)|Y^{t-1}=y^{t-1}] > F_h(x^*||y^{t-1})$. 
\end{theorem}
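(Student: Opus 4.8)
The plan is to expand the one-step change of the Fenchel-coupling, split it into a linear (payoff) part and a conjugate-curvature part, show the expected linear contribution vanishes by the zero-sum structure, and show the curvature part is strictly positive via Jensen's inequality. Throughout I fix the realized state $Y^{t-1}=y^{t-1}$, assume $\eta_i=1$, and use the maximizing-argument identity $x_i^{t-1}=\nabla h_i^*(y_i^{t-1})$. Write $g_i:=\sum_{j\neq i}A^{(ij)}x_j^{t-1}$. Since the constant terms $h_i(x_i^*)$ cancel in the difference, the drift equals $\sum_{i}\big(E[h_i^*(Y_i^t)]-h_i^*(y_i^{t-1})\big)-\sum_{i}\langle E[Y_i^t]-y_i^{t-1},x_i^*\rangle$. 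Because each $s_j$ is drawn independently from $x_j^{t-1}$, linearity of expectation gives $E[Y_i^t\mid y^{t-1}]=y_i^{t-1}+g_i$, which is precisely the deterministic update $\tilde y_i^t$ of (\ref{eqn:D.FTRL}).

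The game-theoretic heart is to show the expected linear contribution $\sum_i\langle g_i,x_i^*\rangle$ is zero. I would rewrite $\sum_i\langle x_i^*,\sum_{j\neq i}A^{(ij)}x_j^{t-1}\rangle$ and use the antisymmetry $A^{(ij)}=-[A^{(ji)}]^\intercal$ to swap the roles of $x^*$ and $x^{t-1}$, turning the double sum into $-\sum_i\langle x_i^{t-1},\sum_{j\neq i}A^{(ij)}x_j^*\rangle$, which vanishes because a fully-mixed Nash equilibrium with value normalized to zero is an equalizer: $\sum_{j\neq i}A^{(ij)}x_j^*=\mathbf{0}$ for every $i$ (the assumption carried through the paper). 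Hence the drift reduces to $\sum_i\big(E[h_i^*(Y_i^t)]-h_i^*(y_i^{t-1})\big)$. For this term I would use convexity of $h_i^*$ twice: Jensen gives $E[h_i^*(Y_i^t)]\geq h_i^*(\tilde y_i^t)$, and the first-order inequality at $y_i^{t-1}$ with $\nabla h_i^*(y_i^{t-1})=x_i^{t-1}$ gives $h_i^*(\tilde y_i^t)\geq h_i^*(y_i^{t-1})+\langle x_i^{t-1},g_i\rangle$. Summing and using $\sum_i\langle x_i^{t-1},g_i\rangle=0$ (the total payoff in a zero-sum game is zero) yields the weak bound $E[F_h(x^*\|Y^t)]\geq F_h(x^*\|y^{t-1})$. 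Equivalently, this decomposes the drift as a deterministic part (the non-negative ``divergence from Nash'' of \cite{BaileyEC18}) plus a stochastic Jensen gap.

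The main obstacle is upgrading this to the strict inequality stated in Theorem~\ref{thm:Drift}. I would obtain it from strict Jensen: since $\nabla h_i^*>\mathbf{0}$ on the normalized dual space, $h_i^*$ is strictly convex there, so $E[h_i^*(Y_i^t)]>h_i^*(\tilde y_i^t)$ whenever $Y_i^t$ is non-degenerate. Non-triviality of the game supplies an agent $i$ and two opponent profiles producing distinct payoff increments $\sum_{j\neq i}A^{(ij)}e_{s_j}$; since $x^{t-1}$ is fully mixed ($\nabla h^*>\mathbf{0}$), both occur with positive probability, so $Y_i^t$ takes at least two values and Jensen is strict for that agent, while every other agent contributes non-negatively. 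The one delicate point is ensuring the two increments remain distinct \emph{after} normalizing the first coordinate, i.e.\ that they do not differ merely by a multiple of $\mathbf{1}$; for (\ref{eqn:S.MWU}) this is exactly the regime in which the softmax (and hence $F_h$, which is shift-invariant) actually responds to the randomness, so this is precisely the property non-triviality must guarantee, and verifying it is the crux of the argument.
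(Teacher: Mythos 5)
Your decomposition and your weak inequality are sound, and up to that point you are essentially reproducing the paper's argument in expanded form: the paper applies Jensen's inequality to the convex map $y\mapsto F_h(x^*\|y)$ to reduce the stochastic drift to the deterministic one, $E[F_h(x^*\|Y^t)]\geq F_h(x^*\|\tilde y^t)$, which is exactly your ``Jensen plus linearity of the $\langle y,x^*\rangle$ term'' step. Where the two proofs genuinely part ways is the source of strictness. The paper converts $F_h(x^*\|\tilde y^t)$ and $F_h(x^*\|y^{t-1})$ into Bregman divergences via Lemma \ref{lem:FBequiv} and then imports the strict inequality $D_h(x^*\|x^t)>D_h(x^*\|x^{t-1})$ wholesale from Theorem 4.1 of \cite{BaileyEC18}: the \emph{deterministic} step already strictly increases the divergence. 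You instead settle for the weak first-order bound on the deterministic step and try to extract strictness from the \emph{stochastic} Jensen gap. Both mechanisms are legitimate, and yours has a real advantage the paper's chain lacks: at $y^{t-1}=g(x^*)$ the deterministic update is stationary ($g_i=\mathbf{0}$), so no strict increase can come from the deterministic part there, and the Jensen gap is precisely what keeps the drift strictly positive at the equilibrium itself.

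The problem is that you have left the one step that delivers strictness unproven, and you say so: you need some agent's normalized increment $\sum_{j\neq i}A^{(ij)}e_{s_j}-\bigl[\sum_{j\neq i}A^{(ij)}e_{s_j}\bigr]_1\mathbf{1}$ to be genuinely random, i.e.\ non-triviality must survive the quotient by multiples of $\mathbf{1}$. This is a genuine gap, not a formality, because the paper's definition of non-triviality only asks that the \emph{unnormalized} vectors differ. It is closable, and the closing argument uses exactly the hypotheses of the theorem: if all of agent $i$'s increments $v_s=\sum_{j\neq i}A^{(ij)}e_{s_j}$ agree up to multiples of $\mathbf{1}$, then averaging against $x^*$ and using the equalizer property $\sum_{j\neq i}A^{(ij)}x_j^*=\mathbf{0}$ forces every $v_s$ to itself be a multiple of $\mathbf{1}$, i.e.\ the relevant columns of $A^{(ij)}$ are constant vectors; antisymmetry $A^{(ji)}=-[A^{(ij)}]^\intercal$ then makes agent $j$'s increments literally constant, and full mixedness of $x^*$ (agent $j$ must be indifferent) collapses those constants, propagating the degeneracy until it contradicts non-triviality. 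You should also tighten the claim that ``$\nabla h_i^*>\mathbf{0}$ implies $h_i^*$ is strictly convex on the normalized space'': positivity of the gradient does not give strict convexity by itself; what does is the injectivity of $\nabla h_i^*$ on $\{y_i:y_{i1}=0\}$ established in Lemma \ref{lem:dual}, since a convex differentiable function with injective gradient is strictly convex. With those two repairs your argument is complete and, unlike the paper's, self-contained rather than resting on the cited deterministic divergence theorem.
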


\begin{proof}
	Let $y_i^t= E[Y_i^t| Y_i^{t-1}=y_i^{t-1}]$.  
	As shown in Section \ref{sec:Stochastic}, $y_i^t$ is also the update of $y_i^{t-1}$ in (\ref{eqn:D.FTRL}). 
	Since the convex-conjugate $h^*_i$ is convex, the Fenchel-coupling $F(x^*||y)=h_i^*(y_i)-\langle y_i, x_i^*\rangle +h_i(x_i)$ is convex with respect to $y$.  
	Thus, by Jensen's inequality, 
	\begin{align*}
		E[F_h(x^*||Y^t)|Y^{t-1}=y^{t-1}]
		\geq F_h(x^*||E[Y^t|Y^{t-1}=y^{t-1})]= F_h(x^*||y^t)
		\end{align*}
	I.e., after one iteration, the expected Fenchel-coupling of (\ref{eqn:S.FTRL}) is at least as large as the Fenchel-coupling of (\ref{eqn:D.FTRL}). 
	
	By Lemma \ref{lem:FBequiv}, $F_h(x^*||y^t)= D_h(x^*||y^t)$ and $F_h(x^*||y^{t-1})= D_h(x^*||y^{t-1})$.
	Finally, by Theorem 4.1 of \cite{BaileyEC18}, $D_h(x^*||y^t)> D_h(x^*||y^{t-1})$. 
	Combining all parts completes the proof of the Theorem, i.e., 
	\begin{align*}
	E[F_h(x^*||Y^t)|Y^{t-1}=y^{t-1}]
	\geq  F_h(x^*||y^t)=D_h(x^*||y^t)
	> D_h(x^*||y^{t-1})=F_h(x^*||y^{t-1}).
	\end{align*}
\end{proof}

\subsection{Proof of Theorem \ref{thm:returntime}}

Let $\tau_{\bar{B}}^+$ be the time to return to any state $y\in \bar{B}\subseteq {\cal Y}$ after leaving the set $\bar{B}$.
The term $\tau_{\bar{B}}^+$ is known as a \emph{stopping time} and we make use of a Corollary from \cite{menshikov2016} to show Theorem \ref{thm:returntime}.

\begin{lemma}[see \cite{menshikov2016},  Corollary 2.6.11]\label{thm:Null}
	Let $\{Y^t\}_{t=0}^\infty$ be an irreducible Markov chain on a countable state space ${\cal Y}$.  Suppose there exists a non-empty $\bar{B} \subseteq {\cal Y}$ and a function $f: {\cal Y}\to \mathbb{R}_+$ such that $f(Y^T)$ is integrable, and 
	\begin{align}
	E[f(Y^{t+1})-f(Y^t)||Y^t=y] \geq 0 &\text{ for all } y\in {\cal Y}\setminus \bar{B};\label{eqn:cond1}\\
	E[(f(Y^{t+1})-f(Y^t))^+||Y^t=y] \leq b \label{eqn:cond2} &\text{ for all } y\in {\cal Y}\setminus \bar{B};\\
	f(y)> \max_{z\in \bar{B}} f(z) &\text{ for some } y\in {\cal Y}\setminus \bar{B}.\label{eqn:cond3}
	\end{align}
	Then $E[\tau_{\bar{B}}^+]=\infty$. 
\end{lemma}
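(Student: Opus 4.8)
The plan is to use $f$ as a submartingale-type Lyapunov function and to argue by contradiction. Condition \eqref{eqn:cond1} says that while the chain sits outside $\bar B$ the process $f(Y^t)$ resists decreasing in the mean, yet in order to return to $\bar B$ the chain must first bring $f$ down to at most $m:=\max_{z\in\bar B}f(z)$; starting from a state whose $f$-value exceeds $m$, I claim this descent costs infinite expected time. Accordingly, suppose toward a contradiction that $E[\tau_{\bar B}^+]<\infty$. For the (finite) set $\bar B$ arising in all our applications, a finite mean return time to $\bar B$ is the standard Foster-type criterion making the irreducible chain positive recurrent, and hence the hitting time $\tau:=\inf\{t\ge 0:Y^t\in\bar B\}$ of $\bar B$ satisfies $\tau<\infty$ almost surely with finite mean when started from \emph{any} state — in particular from the state $y^*\in{\cal Y}\setminus\bar B$ with $f(y^*)>m$ furnished by condition \eqref{eqn:cond3}.

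Next I would examine the stopped, nonnegative process $Z_t:=f(Y^{t\wedge\tau})$. Since $\tau$ is a stopping time, $\{s<\tau\}\in{\cal F}_s$, and on this event $Y^s\notin\bar B$, so conditions \eqref{eqn:cond1} and \eqref{eqn:cond2} apply with $D_s:=f(Y^{s+1})-f(Y^s)$: one has $E[D_s\mid{\cal F}_s]\ge 0$ and $E[D_s^+\mid{\cal F}_s]\le b$ on $\{s<\tau\}$. The first inequality, applied term by term to the finite sum $E_{y^*}[Z_t]-f(y^*)=\sum_{s=0}^{t-1}E_{y^*}[D_s\mathbf 1_{s<\tau}]$, shows $Z_t$ is a submartingale, so $E_{y^*}[Z_t]\ge f(y^*)$ for every finite $t$. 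Here the hypothesis that $f(Y^T)$ is integrable guarantees these conditional expectations are well defined.

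The step I expect to be the main obstacle is passing to the limit $t\to\infty$, since condition \eqref{eqn:cond2} controls only the positive part of the increments and the usual bounded-increment optional stopping theorem does not apply directly. The remedy is to leverage this one-sided bound for domination: $\sum_{s}E_{y^*}[D_s^+\mathbf 1_{s<\tau}]=\sum_s E_{y^*}[\mathbf 1_{s<\tau}E[D_s^+\mid{\cal F}_s]]\le b\sum_s P_{y^*}(\tau>s)=b\,E_{y^*}[\tau]<\infty$, so that $Z_t\le G:=f(y^*)+\sum_{s<\tau}D_s^+$ with $G$ integrable. Because $\tau<\infty$ almost surely, $Z_t\to f(Y^\tau)$, and reverse Fatou (valid as $Z_t$ is dominated above by the integrable $G$) yields $f(y^*)\le\limsup_t E_{y^*}[Z_t]\le E_{y^*}[f(Y^\tau)]$. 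But $Y^\tau\in\bar B$ forces $f(Y^\tau)\le m$, hence $f(y^*)\le m$, contradicting $f(y^*)>m$. This contradiction rules out $E[\tau_{\bar B}^+]<\infty$ and establishes $E[\tau_{\bar B}^+]=\infty$, as claimed.
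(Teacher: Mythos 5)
The paper contains no proof of this lemma: it is imported verbatim from \cite{menshikov2016} (Corollary 2.6.11), so there is no internal argument to compare against. Judged on its own merits, your reconstruction is essentially the standard Lamperti-type argument behind the cited result, and its core is correct: the stopped process $Z_t=f(Y^{t\wedge\tau})$ is a submartingale by \eqref{eqn:cond1}; condition \eqref{eqn:cond2} supplies the integrable one-sided dominator $G=f(y^*)+\sum_{s<\tau}D_s^+$ with $E_{y^*}[G]\le f(y^*)+b\,E_{y^*}[\tau]$ under the contradiction hypothesis; and reverse Fatou then forces $f(y^*)\le E_{y^*}[f(Y^\tau)]\le\max_{z\in\bar B}f(z)$, contradicting \eqref{eqn:cond3}. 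Your diagnosis that bounded-increment optional stopping fails here and that the one-sided bound must be leveraged through domination is exactly the crux of the lemma.

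Two caveats. First, your bridge from $E[\tau^+_{\bar B}]<\infty$ to $E_{y^*}[\tau_{\bar B}]<\infty$ detours through positive recurrence and requires $\bar B$ to be finite --- an assumption absent from the statement, which you flag but do not discharge. It is harmless in this paper's applications ($\bar B$ is a compact subset of the dual space, which meets ${\cal Y}$ in finitely many points by the common-denominator argument used in the proof of Theorem \ref{thm:boundary}), but it is also avoidable: run your martingale computation as a self-contained contradiction on the hitting time from $y^*$, concluding unconditionally that $E_{y^*}[\tau_{\bar B}]=\infty$, and then transfer to the return time via irreducibility --- take any positive-probability path from $\bar B$ to $y^*$ and cut it at its \emph{last} visit to $\bar B$; this exhibits a state $x\in\bar B$ from which the chain reaches $y^*$ with positive probability before re-entering $\bar B$, whence $E_x[\tau^+_{\bar B}]=\infty$ by the strong Markov property. (Note plain irreducibility does not by itself give a path to $y^*$ avoiding $\bar B$; the cut-at-last-visit step is what makes the transfer legitimate, and your positive-recurrence detour silently substitutes for it.) Second, the two facts you invoke as standard --- that finite mean return time to a finite set implies positive recurrence of an irreducible chain (which needs the finite-mean hypothesis from \emph{every} state of $\bar B$, so you should state which reading of $E[\tau^+_{\bar B}]<\infty$ you negate), and that positive recurrence implies $E_y[\tau_x]<\infty$ for all states $x,y$ --- are true but carry real weight and should be cited or proved.
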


Using this result, we now prove that the expected time to return to any set when strategies are updated with (\ref{eqn:S.MWU}) is infinity. 
We prove the result for any variant of (\ref{eqn:S.FTRL}) where $x_i=\nabla h^*_i(y_i)>0$ for all $y_i\in \mathbb{R}^{S_i}$ and for all $i$, i.e., for any update rules that guarantees players will always select fully mixed strategies. 

\Return*

\begin{proof}
    Recall from Lemma \ref{lem:dual} $g_i(x_i)=\nabla h_i(x_i)-[\nabla h_i(x_i)]_1$ is bijection between the relative interior of the primal space ${\cal X}$ and the dual space ${\cal Y}$. 
    For a compact $B\subset rel.int({\cal X})$, let $\bar{B}=\bigcup{x\in X}\{g_1(x_1),...,g_N(x_N)\}$ be the corresponds set of payoff vectors that yield $B$ according to (\ref{eqn:S.FTRL2}).  
    Since $B$ is compact and $g_i$ is continuous, $\bar{B}$ is also compact.  
    Thus, to show that we expect an infinite amount of time for $\{X\}_{t=0}^T$ to return to ${B}$, it suffices to show that we expect an infinite amount of time for $\{Y\}_{t=0}^T$ to return to $\bar{B}$.

	The statement of Theorem \ref{thm:boundary} allows for arbitrarily values of $\langle x^*_i, \sum_{j\neq i} A^{(ij)} x_j^* \rangle$ while our Markov chain is specifically built for games where $\langle x^*_i, \sum_{j\neq i} A^{(ij)} x_j^* \rangle=0$.  
	First observe that (\ref{eqn:S.FTRL}) is invariant to constant shifts in $A^{(ij)}$ and therefore without loss of generality we may assume $\langle x^*_i, \sum_{j\neq i} A^{(ij)} x_j^* \rangle={\bf 0}$ for all $i\in {\cal N}$.
	
	Next, Let $\bar{B}$ be any compact set in ${\cal Y}$.
	We now show that the expected time to return to any state in $\bar{B}$ is infinity when strategies are updated via (\ref{eqn:S.MWU}). 
	It suffices to select a function  $f:{\cal Y}\to \mathbb{R}_+$ such that $\bar{B}, {\cal Y}, f$, and $\{Y^t\}_{t=0}^\infty$ satisfy the conditions of Theorem \ref{thm:Null}.
	
	We select $f=F_h$ (\ref{eqn:Fenchel}). Equivalently, this is the Bregman divergence of $X^t=\nabla h^*_i(Y_i^t)$ when $X^t$ is in the interior as is the case for (\ref{eqn:S.MWU}). 
	We now show our selection of $f$ satisfies the conditions of Theorem \ref{thm:Null}.  
	
	In Theorem \ref{thm:Irreducible}, we establish that $\{{Y}^t\}_{t=0}^\infty$ is an irreducible Markov chain on a countable state space if $\nabla h^*_i(y_i)>0$ for all $i$. 
	Moreover, $F_h(x^*||Y^t)$ is trivially integrable;  $Y^t$ comes from the finite set ${\cal Y}^t$ and therefore $E[Y^t]<\infty$ since $F_h$ maps ${\cal Y}$ to $\mathbb{R}_+$. 
	It remains to show the conditions (\ref{eqn:cond1})-(\ref{eqn:cond3}).
	{\color{black}
	(\ref{eqn:cond1}) holds by Lemma \ref{thm:Drift}. }

	For condition (\ref{eqn:cond3}), let $\bar{y}=\argmax_{y \in \bar{B}} F_h(x^*||y)$.  Such a $\bar{y}$ exists since $\bar{B}$ is compact and $F_h$ is continuous. 
	By Lemma \ref{lem:FBequiv}, if $\bar{y}$ is updated with (\ref{eqn:S.FTRL}), then the Fenchel-coupling is expected to increasing implying there is a $z\in {\cal Y}$ such that $F_h(x^*||z)> F_h(x^*||\bar{y})$.  Moreover, $z\in {\cal Y} \setminus B$ by selection of $\bar{y}$ and $z$ satisfies condition (\ref{eqn:cond3}).

	It remains to show (\ref{eqn:cond2}). 
	We show a stronger condition.  
	We show there exists a $b$ such that
	\begin{align*}\sum_{i\in {\cal N}}F_{h_i}(x^*_i||y_i+\sum_{j\neq i} A^{(ij)}e_{s_j}) - \sum_{i\in {\cal N}}F_{h_i}(x^*_i||y_i) \leq b
	\forall \ s\in {\cal S} \text{ and } y\in {\cal Y},\end{align*} 
	i.e., that $F_h$ increases by at most $b$ regardless of the current location and the sample path.  
	This certainly implies that the expectation is bounded when $y^T\notin \bar{B}$. 
	
	Following identically to the proof of Lemma \ref{thm:Drift}, 
	\begin{align*}\sum_{i\in {\cal N}}F_{h_i}(x^*_i||y_i+\sum_{j\neq i} A^{(ij)}e_{s_j}) - \sum_{i\in {\cal N}}F_{h_i}(x^*_i||y_i) 
	=\sum_{i\in {\cal N}}\left( h^*_i(y_i+\sum_{j\neq i} A^{(ij)}e_{s_j})-h^*_i(y_i)\right)\end{align*}
	
	It is well known that the convex conjugate $h^*_i$ is a convex function and therefore
	\begin{align*}
	h^*_i(y_i)\geq h^*_i(y_i+\sum_{j\neq i} A^{(ij)}e_{s_j})  -\langle \nabla h^*_i(y_i+\sum_{j\neq i} A^{(ij)}e_{s_j}), \sum_{j\neq i} A^{(ij)}e_{s_j} \rangle .
	\end{align*}
	Thus, 
	\begin{align*}\sum_{i\in {\cal N}}F_{h_i}(x^*_i||y_i+\sum_{j\neq i} A^{(ij)}e_{s_j}) - \sum_{i\in {\cal N}}F_{h_i}(x^*_i||y_i) &=\sum_{i\in {\cal N}}\left( h^*_i(y_i+\sum_{j\neq i} A^{(ij)}e_{s_j})-h^*_i(y_i)\right)\\
	&\leq \sum_{i \in {\cal N}}\langle \nabla h^*_i(y_i+\sum_{j\neq i} A^{(ij)}e_{s_j}), \sum_{j\neq i} A^{(ij)}e_{s_j} \rangle\\
	&\leq \sum_{\in {\cal N}}\max_{x_i\in {\cal X}_i} \langle x_i, A^{(ij)}e_{s_j}\rangle
	\end{align*}
	where (since) $\nabla h_i^*$ maps $y_i$ to the set of probability vectors (${\cal X}_i$). 
	Therefore, (\ref{eqn:cond2}) holds with $b=\sum_{\in {\cal N}}\max_{x_i\in {\cal X}_i} \langle x_i, A^{(ij)}e_{s_j}\rangle$ which is finite since ${\cal X}_i$ is compact and $\langle x_i, A^{(ij)}e_{s_j}\rangle$ is linear with respect to $x_i$. 
	The conditions of Theorem \ref{thm:Null} are satisfied and thus we expect an infinite number of iterations to pass before the strategies return to the set $\bar{B}$.
\end{proof}

\subsection{Proof of Theorem \ref{thm:boundary}}

Next, we show that the proportion of iterations $Y^t\in \bar{B}$ for any compact $\bar{B}$ is 0.  Define the number of times up to time $T$ where $Y^t=y\in {\cal Y}$ as
\begin{align}
V_y(T)= \sum_{k=0}^{T-1} \mathbf{1}_{\{Y^t=y\}}.
\end{align}
The proportion of time $Y^t=y$ is then simply $V_y(T)/T$.  
By the following theorem from \cite{norris1998markov}, this proportion goes to zero almost surely. 

\begin{theorem}[See \cite{norris1998markov} Theorem 1.10.2]\label{thm:Ergodic}	For any irreducible Markov chain,
	\begin{align}
	p\left( \lim_{T\to \infty}\frac{V_y(T)}{T}\to \frac{1}{E[\tau_y^+]}\right)=1.
	\end{align}
\end{theorem}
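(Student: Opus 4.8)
The plan is to prove the ergodic theorem through the classical regeneration (renewal) argument built on the strong Markov property, reducing the almost-sure statement to the strong law of large numbers applied to the i.i.d.\ inter-visit times. Throughout, fix the state $y$ and adopt the convention that $1/E[\tau_y^+]=0$ when $E[\tau_y^+]=\infty$.

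First I would set up the regeneration structure. Let $S_0=\inf\{t\ge 0: Y^t=y\}$ be the first visit time to $y$, and for $r\ge 1$ let $S_r=\inf\{t>S_{r-1}: Y^t=y\}$ be the time of the $(r+1)$-th visit; set the excursion lengths $L_r=S_r-S_{r-1}$. The key structural claim, which I would establish by applying the strong Markov property at each stopping time $S_{r-1}$ (which satisfies $Y^{S_{r-1}}=y$ on $\{S_{r-1}<\infty\}$), is that conditional on $S_0<\infty$ the variables $L_1,L_2,\dots$ are i.i.d.\ each distributed as the return time $\tau_y^+$ of the chain started at $y$. Intuitively, each time the chain hits $y$ it forgets its past, so each excursion back to $y$ is an independent copy of the first return.

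Next I would apply the strong law of large numbers. Writing $S_n=S_0+\sum_{r=1}^n L_r$, the SLLN in the form valid for nonnegative summands even with infinite mean (via truncation) gives $S_n/n\to E[\tau_y^+]$ almost surely, where the limit is interpreted as $+\infty$ in the null-recurrent case. Since $S_0$ is almost surely finite by irreducibility and $S_0/n\to 0$, the first-passage term does not affect the limit. I would then invert this to control $V_y(T)=\sum_{k=0}^{T-1}\mathbf{1}_{\{Y^k=y\}}$: by definition $V_y(T)$ counts visits strictly before time $T$, so whenever $V_y(T)\ge 1$ the sandwich $S_{V_y(T)-1}\le T-1<S_{V_y(T)}$ holds. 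In the recurrent case $V_y(T)\to\infty$ almost surely, so dividing the sandwich by $V_y(T)$ and letting $T\to\infty$ forces $T/V_y(T)\to E[\tau_y^+]$, i.e.\ $V_y(T)/T\to 1/E[\tau_y^+]$ almost surely. In the transient case one checks separately that $V_y(T)$ is almost surely bounded while $E[\tau_y^+]=\infty$ (since $\tau_y^+=\infty$ with positive probability), so both sides are $0$; this dichotomy is exhaustive for irreducible chains on the countable state space ${\cal Y}$.

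The hard part will be the rigorous justification of the i.i.d.\ regeneration structure: one must verify that each $S_{r-1}$ is a stopping time, invoke the strong Markov property to conclude that the post-$S_{r-1}$ trajectory is an independent copy of the chain launched from $y$, and peel off the excursion lengths one at a time to obtain genuine independence and identical distribution. Once this regeneration is in place, the extension of the SLLN to the infinite-mean case and the recurrent/transient case split are standard, so the whole statement follows. (We note this is exactly Theorem~1.10.2 of \cite{norris1998markov}, whose proof follows these lines.)
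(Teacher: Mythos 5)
Your proposal is correct, and since the paper supplies no proof of its own here---it imports the statement verbatim from \cite{norris1998markov} (Theorem 1.10.2)---your regeneration argument (strong Markov property at the successive return times $S_r$, i.i.d.\ excursion lengths $L_r\sim\tau_y^+$, SLLN valid for infinite mean, and the sandwich $S_{V_y(T)-1}\le T-1<S_{V_y(T)}$ with the transient case handled separately) is precisely the standard proof given in that source. One small imprecision: $S_0<\infty$ almost surely follows from irreducibility \emph{plus recurrence}, not irreducibility alone, but your separate treatment of the transient case (where $V_y(T)$ stays bounded and $E[\tau_y^+]=\infty$) makes this harmless.
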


With this result, the proof of Theorem \ref{thm:boundary} is straightforward.  

\Boundary*

\begin{proof}
	Similar to the previous theorem, it suffices to show the result for $\langle x^*_i, \sum_{j\neq i} A^{(ij)} x_j^* \rangle=0$.
    It also suffices to show the result for $\{Y^t\}_{t=0}^T$ and $\bar{B}$. 
    The proportion of the time that the strategies appear in $\bar{B}$ is given by 
    \begin{align}
	\sum_{y\in {\cal Y}\cap \hat{B}} \frac{V_y(T)}{T}
	\end{align}
    
    In this proof, we actually assume $A$ is rational, which guarantees a rational Nash equilibrium. 
    Since $A^{(ij)}$ is rational, $\sum_{j\neq i} A^{(ij)}e_{s_j}$ can be expressed as $v_i(s_j)/b$ where $b\in \mathbb{Z}$ and $v_i(s_j)\in \mathbb{Z}^{S_i}$ for $s_i\in S_i$ and $i\in {\cal N}$, i.e., the updates to $y_i$ have a common denominator.
    Thus, every $y{\cal Y}$ can be expressed as $y^0+z/b$ for some $z\in \mathbb{Z}^{\prod_{i\in {\cal N} S_i}}$. 
    Since every $y\in {\cal Y}$ can be expressed with the same denominator, there are finitely many $y\in \hat{B}$.  I.e., there exists a constant $c$ such that ${\cal Y}\cap \hat{B}=c$.

	For any $\epsilon>0$, the probability that the proportion of iterates where $Y^t\in \hat{B}$ is given by:
	\begin{align*}
	p\left( \lim_{T\to \infty}\sum_{y\in {\cal Y}\cap \hat{B}} \frac{V_y(T)}{T}>\epsilon \right)
	=\ &p\left( \sum_{y\in {\cal Y}\cap \hat{B}}\lim_{T\to \infty} \frac{V_y(T)}{T}>\epsilon \right)\\
	\leq\ & p\left( \bigcup_{y\in {\cal Y}\cap \hat{B}}\left\{\lim_{T\to \infty} \frac{V_y(T)}{T}>\frac{\epsilon}{c}\right\} \right)\\
	=\ &\sum_{y\in {\cal Y}\cap \hat{B}}p\left( \lim_{T\to \infty} \frac{V_y(T)}{T}>\frac{\epsilon}{c} \right)=0
	\end{align*}
	by the statement of Theorem \ref{thm:Ergodic}. 
	Therefore, the proportion of the time that the strategies appear in $\hat{B}$ goes to 0 almost surely.
\end{proof}

\section{Convergence for Non-Interior Nash}\label{app:Noninterior}

We begin by giving another formulation of (\ref{eqn:S.MWU}) and show it is equivalent to both (\ref{eqn:S.MWU}) and (\ref{eqn:S.FTRL}) with $h_i(x_i)= \sum_{s_i=1}^{S_i}x_{is_i}\ln x_{is_i}$.  This new form will be simpler for showing convergence to the boundary when there is not an interior Nash equilibrium. 

\begin{lemma}\label{lem:equiv}
	(\ref{eqn:S.FTRL}) with $h_i(x_i)= \sum_{s_i=1}^{S_i}x_{is_i}\ln x_{is_i}$ and (\ref{eqn:S.MWU}) are both equivalent to 
	\begin{align*}\label{eqn:S.MWU2}
	{X}_{is_i}^{t} &= \frac{\exp{(\eta_i Y_{is_i}^t)}}{\sum_{\bar{s}_i\in {\cal S}_i}\exp{(\eta_i Y_{i\bar{s}_i}^t)}}.\tag{Stochastic MWU 2}
	\end{align*}
\end{lemma}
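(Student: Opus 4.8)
The plan is to prove the two claimed equivalences in sequence, both resting on the \textit{maximizing argument} identity $X_i^t = \nabla h_i^*(Y_i^t)$ recalled in the preliminaries. First I would establish the equivalence between (\ref{eqn:S.FTRL}) with the entropic regularizer $h_i(x_i) = \sum_{s_i} x_{is_i}\ln x_{is_i}$ and (\ref{eqn:S.MWU2}). The update step of (\ref{eqn:S.FTRL}) requires solving
\begin{align*}
X_i^t = \argmax_{x_i \in {\cal X}_i}\left\{\langle Y_i^t, x_i\rangle - \tfrac{1}{\eta_i}\sum_{s_i\in {\cal S}_i} x_{is_i}\ln x_{is_i}\right\},
\end{align*}
which is a strictly concave maximization over the simplex. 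I would form the Lagrangian with a single multiplier $\lambda_i$ for the constraint $\sum_{s_i} x_{is_i}=1$ (the positivity constraints are inactive since the entropic gradient blows up at the boundary, forcing an interior solution). Setting the gradient to zero yields $Y_{is_i}^t - \tfrac{1}{\eta_i}(\ln x_{is_i}+1) - \lambda_i = 0$, so $x_{is_i} \propto \exp(\eta_i Y_{is_i}^t)$; normalizing over the simplex gives exactly (\ref{eqn:S.MWU2}).

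Next I would show (\ref{eqn:S.MWU2}) implies the recursive form (\ref{eqn:S.MWU}). The key is the one-step update $Y_{is_i}^t = Y_{is_i}^{t-1} + \sum_{j\neq i} e_{s_i}A^{(ij)}X_j^{t-1}$ from the definition of (\ref{eqn:S.FTRL}). Substituting into (\ref{eqn:S.MWU2}) and factoring the exponential of the sum,
\begin{align*}
X_{is_i}^t = \frac{\exp(\eta_i Y_{is_i}^{t-1})\exp\!\big(\eta_i \sum_{j\neq i} e_{s_i}A^{(ij)}X_j^{t-1}\big)}{\sum_{\bar s_i}\exp(\eta_i Y_{i\bar s_i}^{t-1})\exp\!\big(\eta_i \sum_{j\neq i} e_{\bar s_i}A^{(ij)}X_j^{t-1}\big)}.
\end{align*}
I would then multiply numerator and denominator by $\big(\sum_{\hat s_i}\exp(\eta_i Y_{i\hat s_i}^{t-1})\big)^{-1}$ and recognize, via (\ref{eqn:S.MWU2}) applied at time $t-1$, that $\exp(\eta_i Y_{is_i}^{t-1})/\sum_{\hat s_i}\exp(\eta_i Y_{i\hat s_i}^{t-1}) = X_{is_i}^{t-1}$. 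This substitution turns every $\exp(\eta_i Y^{t-1})$ factor into the corresponding $X^{t-1}$ entry, producing precisely the recursive form (\ref{eqn:S.MWU}). Since both directions of the chain hold, all three formulations coincide.

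The main subtlety, rather than a genuine obstacle, is the handling of the boundary. On $rel.int({\cal X})$ the argument above is clean because the strictly convex entropic regularizer forces an interior maximizer and the KKT analysis involves only the equality constraint, mirroring Lemma \ref{lem:dual}. I would note that (\ref{eqn:S.MWU2}) is manifestly well-defined and strictly positive for every finite $Y_i^t$, so the equivalence as stated concerns the interior dynamics; the extension to boundary behavior is deferred to the separately constructed primal kernel (\ref{eqn:PrimalKernel}) with the convention $0/0=0$. I would close by remarking that the recursion (\ref{eqn:S.MWU}) preserves positivity of $X^t$ given a positive $X^{t-1}$, so starting from a fully mixed strategy the three formulations are genuinely interchangeable throughout the dynamics.
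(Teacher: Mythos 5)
Your proposal is correct and follows essentially the same route as the paper: the first equivalence is the same first-order-optimality computation (you use a Lagrangian where the paper eliminates the constraint by substituting $x_{iS_i}=1-\sum_{s_i<S_i}x_{is_i}$, but the resulting conditions are identical), and the second is the same telescoping of $\exp(\eta_i Y_{is_i}^t)$ via $Y^t=Y^{t-1}+\text{payoff}$, which the paper runs as an induction in the opposite direction after fixing $y^0$ to match $X^0$. The only point worth making explicit, as the paper does, is that (\ref{eqn:S.MWU}) has no intrinsic $y^0$, so the initial condition must be chosen so the two formulations agree at $t=0$.
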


\begin{proof}
	First, we show that (\ref{eqn:S.FTRL}) is equivalent to (\ref{eqn:S.MWU2}).  
	Recall (\ref{eqn:S.FTRL}) is
	\begin{align*}
	{X}_i^{t} &= \argmax_{x_i \in \mathcal{X}} \left\{\left<Y_i^{t},x_i\right>-\frac{h_i(x_i)}{\eta_i}\right\}\\
	&=\argmax_{x_i \in \mathcal{X}} \left\{\left<Y_i^{t},x_i\right>-\frac{\sum_{s_i=1}^{S_i}x_{is_i}\ln x_{is_i}}{\eta_i}\right\}
	\end{align*}
	Perform the variable solution $x_{iS_i}=1-\sum_{s_i=1}^{S_i-1} x_{is_i}$ and $X_i^t$ is the maximizer of
	\begin{align*}
	f(x_i)&= \sum_{s_i=1}^{S_i-1} Y_{is_i}^t \cdot x_{is_i}
	\\&\phantom{=}-\sum_{s_i=1}^{S_i-1} \frac{x_{is_i}\ln x_{is_i}}{\eta_i}+Y_{iS_i}^t \cdot (1-\sum_{s_i=1}^{S_i-1} x_{is_i})
	\\&\phantom{=}- \frac{ (1-\sum_{s_i=1}^{S_i-1} x_{is_i})\ln  (1-\sum_{s_i=1}^{S_i-1} x_{is_i})}{\eta_i}
	\end{align*}
	where the domain of $f$ is given by $\{x_i\in {\mathbb{R}^{S_i-1}}: x_{is_i}\geq 0 \ \forall s_i=1...S_i-1, \sum_{s_i=1}^{S_i-1}x_{is_i}\leq 1\}$. The function $f$ is strictly convex, and, if the optimizer is in the interior of the domain, then its optimality conditions are given by:
	\begin{align*}
	\frac{\partial f}{\partial x_{is_i}}=Y_{is_i}^t-\frac{\ln x_{is_i}}{\eta_i} - Y_{iS_i}^t + \frac{\ln (1-\sum_{\bar{s}_i=1}^{n-1}x_{i\bar{s}_i})}{\eta_i}=0 
	\end{align*} 
	Recalling $x_{iS_i}=1-\sum_{s_i=1}^{S_i-1} x_{is_i}$, these optimality conditions are rewritten as 
	\begin{align*}\exp(\eta(Y_{is_i}^t-Y_{iS_i}^t))=\frac{X_{is_i}^t}{X_{iS_i}^t} \ \forall s_i=1,...,S_i-1
	\end{align*}
	(\ref{eqn:S.MWU2}) satisfies these conditions and in the interior of the domain of $f$ and therefore (\ref{eqn:S.FTRL}) is equivalent to (\ref{eqn:S.MWU2}). 
	Next, we show that (\ref{eqn:S.MWU}) is equivalent to (\ref{eqn:S.MWU2}).
	We proceed by induction. 
	In the definition of (\ref{eqn:S.MWU}), there is no $y_i^0$ and therefore we simply select $y_i^0$ such that the result holds for $t=0$.

	Next, by definition of $Y_i^t$ and by the inductive hypothesis, observe that
	\begin{align*}
	X_{is_i}^{t-1}\exp{(\eta_i\cdot  \sum_{j\neq i} e_{s_i}A^{(ij)}X_j^{t-1})}
	=\ &X_{is_i}^{t-1}\exp{(\eta_i\cdot (Y_{is_i}^t- Y_{is_i}^{t-1}}))\\
	=\ & \frac{\exp{(\eta_i Y_{is_i}^{t-1})}}{\sum_{\bar{s}_i\in {\cal S}_i}\exp{(\eta_i Y_{i\bar{s}_i}^{t-1})}}\exp{(\eta_i\cdot (Y_{is_i}^t- Y_{is_i}^{t-1}}))\\
	=\ & \frac{\exp{(\eta_i\cdot Y_{is_i}^t})}{\sum_{\bar{s}_i\in {\cal S}_i}\exp{(\eta_i Y_{i\bar{s}_i}^{t-1})}}.
	\end{align*}	
	
	Finally, recall (\ref{eqn:S.MWU}) is given by	
	\begin{align*}
	X_{is_i}^t&=\frac{X_{is_i}^{t-1}\exp{(\eta_i\cdot  \sum_{j\neq i} e_{s_i}A^{(ij)}X_j^{t-1})}}{\sum_{\bar{s}_i\in {\cal S}_i}X_{i\bar{s}_i}^{t-1}\exp{(\eta_i\cdot  \sum_{j\neq i} e_{\bar{s}_i}A^{(ij)}X_j^{t-1})}}\\
	&= \frac{\exp{(\eta_i Y_{is_i}^t)}}{\sum_{\bar{s}_i\in {\cal S}_i}\exp{(\eta_i Y_{i\bar{s}_i}^t)}}
	\end{align*}
	by the previous observation. 
	Thus, (\ref{eqn:S.FTRL}), (\ref{eqn:S.MWU}), and (\ref{eqn:S.MWU2}) are all equivalent. 
\end{proof}

Our proof of Theorem \ref{thm:noninterior} now follows similarly to proofs from \cite{piliouras2014optimization,GeorgiosSODA18,BaileyEC18} for deterministic variants of FTRL. 
Specifically, we show that if there are \emph{non-essential} strategies (not used in a Nash equilibrium), then there is at least one where probability of playing that strategy will go to zero. 

\begin{definition}
 A strategy $s_i \in \mathcal{S}_i$ is essential iff there is a Nash equilibrium $x^*$ with $x^*_{s_i} > 0$.	
\end{definition}

\Noninterior*

\begin{proof}
\cite[Lemma C.3]{GeorgiosSODA18} shows that there if there are non-essential strategies, then there is at least agent (without of generality, agent 1) and one non-essential strategy $s_1$ such that $\langle e_{s_1}, A^{(12)}x_2^*\rangle < \langle x_1^*, A^{(12)}x_2^*\rangle$, i.e., agent 1 is strictly worse off switching from the Nash-equilibrium to the non-essential strategy $s_1$. 
Let $s_1$ be this non-essential strategy and let $s'_1$ be any essential strategy.

Without loss of generality, suppose as usual that $\langle x_1^*, A^{(12)}x_2^*\rangle =0$ and let $\delta=\langle e_{s_1}, Ax_2^*\rangle $.  By selection of $s_1$, $\delta <0$.
Next, observe that $\langle e_{s'_1}, A^{(12)}x_2^*\rangle =0$ for any essential $s'_1$ since $x^*_{1s'_1}>0$. 
By continuity of the agent's payoff function, $\langle x_1, A^{(12)}x_2\rangle$, there is a neighborhood $B$ around $x^*$ such that $\langle e_{s_1}, A^{(12)}x_2\rangle<\frac{2\delta}{3}$ and $\langle e_{s'_1}, A^{(12)}x_2\rangle>\frac{\delta}{3}$ for all $x\in B$.

Let $\epsilon =\exp(\eta)-1$.  
It is well known that there exists an $\epsilon_0$ such that for all $\epsilon<\epsilon_0$ that (\ref{eqn:S.MWU}) converges to the set of $O(\epsilon)$-Nash equilibria with probability 1 (See e.g., \cite{Cesa06}). 
Equivalently, for all $\eta< \eta_0= \ln (\epsilon_0+1)$, (\ref{eqn:S.MWU}) converges to the set of $O(\exp(\eta)-1)$-Nash equilibria. 
Formally, $x$ is an $\epsilon$-Nash equilibrium if $\langle x_i, \sum_{j\neq i} A^{(ij)} x_j\rangle\geq \langle e_{is_i}, \sum_{j\neq i} A^{(ij)} x_j\rangle-\epsilon$ for all $s_i\in {\cal S}_i$ and all $i\in {\cal N}$, i.e., deviating from $x_i$ causes agent $i$ to lose at most $\epsilon$ utility. 
By continuity of the payout function, the set of $\epsilon$-Nash equilibria converges to the set of Nash equilibria as $\epsilon\to 0$. 
As such, we select $\epsilon<\epsilon_0$ small enough such that the set of $O(\epsilon)$-Nash equilibria is contained in $B$.

Since $\bar{X}^T=\sum_{t=1}^T X^t/T$ converges to the set of $O(\epsilon)-$Nash equilibrium  with probability 1, it also converges to $B$. 
By selection of $B$, $s_1$ and $s'_1$, this implies that $\lim_{T\to \infty}\langle e_{s_1}, A^{(12)} \bar{X}_2^T\rangle> \frac{2\delta}{3}$ while $\lim_{T\to \infty}\langle e_{s_2}, A^{(12)} \bar{X}_2^T\rangle< \frac{\delta}{3}$
By Lemma \ref{lem:equiv}, the limit of the ratio between playing $s_1$ and $s'_1$ is given by:

\begin{align*}
    \frac{X_{is_i}^T}{X_{is'_i}^T}&=\exp{\left(\eta_i \cdot T\cdot (Y_{is_i}^T-Y_{is_i'}^T)\right)}\\
    &=\exp{\left(\eta_i \cdot T\cdot (Y_{is_i}^0+\langle e_{s_1},\sum_{t=1}^TA^{(12)}X_2^t\rangle-Y_{is'_i}^0-\langle e_{s'_1},\sum_{t=1}^TA^{(12)}X_2^t\rangle) \right)}\\
    &=\exp{\left(\eta_i \cdot T\cdot (Y_{is_i}^0-Y_{is_i'}^0)\right)}\cdot \exp{\left(\eta_i \cdot T\cdot (\langle e_{s_1},\sum_{t=1}^TA^{(12)}X_2^t\rangle-\langle e_{s'_1},\sum_{t=1}^TA^{(12)}X_2^t\rangle) \right)}\\
    &=\exp{\left(\eta_i \cdot T\cdot (Y_{is_i}^0-Y_{is_i'}^0)\right)}\cdot \exp{\left(\eta_i \cdot T^2\cdot (\langle e_{s_1},A^{(12)}\bar{X}_2^T\rangle-\langle e_{s'_1},A^{(12)}\bar{X}_2^t\rangle) \right)}\\
    &\to 0 \ as  \ T\to \infty \ w.p. \ 1.
\end{align*}

Since $\langle e_{s_1},A^{(12)}\bar{X}_2^T\rangle<\frac{2\delta}{3}$ and $\langle e_{s'_1},A^{(12)}\bar{X}_2^T\rangle>\frac{\delta}{3}$. 
Finally, observe that $\frac{X_{is_i}^T}{X_{is'_i}^T}\geq {X_{is_i}^T}\geq 0$ and therefore $\frac{X_{is_i}^T}{X_{is'_i}^T}\to 0$ implies ${X_{is_i}^T}\to 0$ and agent 1's strategy will converge to the boundary (with $x_{is_i}\approx 0$) with probability 1. 
\end{proof}

\section{Feller Chain in the Primal-space Markov Chain}\label{app:Feller}

\Feller*

\begin{proof}
By Definition \ref{def:Feller}, we must show that $P(x^{(n)}, \cdot) \Rightarrow P(x,\cdot)$ as $x_n \to x$. 
It suffices to show that for all continuous and bounded functions $g: {\cal X} \to \mathbb{R}$ that the expectation of $g$ with respect to the measure $P(x_n, \cdot)$ converges to the expectation of $g$ with respect to the measure $P(x,\cdot)$ (see e.g., Theorem 2.1 in \cite{billingsley1968convergence}).

Let $g$ be such a function. 
Recall that the probability transition kernel is
\begin{align*}
	P(x^{t-1},X^t=x^t)= \sum_{{s \in {\cal S}:}\atop{R(s,x^{t-1})=x^t}}\prod_{i\in {\cal N}} x_{is_i}^{t-1}.\tag{Probability Transition Kernel for \ref{eqn:S.MWU}}
\end{align*}
Since ${\cal S}$ is finite, $P(x^{t-1},x)=0$  for all but a finite number of $x\in {\cal X}$.  
Thus, the limit of the expectation is given by
\begin{align*}
	\lim_{n\to \infty}E_{P(x^{(n)}, \cdot)}[g(z)]
	=&
	\lim_{n\to \infty}\sum_{{z\in {\cal X}:}\atop{P(x^{(n)}, z)>0}}P(x^{(n)}, z) \cdot g(z)\\
	=& \lim_{n\to \infty}\sum_{{z\in {\cal X}:}\atop{P(x^{(n)}, z)>0}}\sum_{{s \in {\cal S}:}\atop{R(s,x^{(n)})=z}}\prod_{i\in {\cal N}} x_{is_i}^{(n)} \cdot g(z)\\
	=& \lim_{n\to \infty}\sum_{s \in {\cal S}}\prod_{i\in {\cal N}} x_{is_i}^{(n)} \cdot g(R(s,x^{(n)}))\\
	=& \sum_{s \in {\cal S}}\prod_{i\in {\cal N}} x_{is_i} \cdot g(R(s,x))
\end{align*}
since the resulting expectation is continuous and bounded in $x^{(n)}$. 
Thus, 
\begin{align*}
\lim_{n\to \infty}E_{P(x^{(n)}, \cdot)}[g(z)]&= \sum_{s \in {\cal S}}\prod_{i\in {\cal N}} x_{is_i} \cdot g(R(s,x))\\&=E_{P(x, \cdot)}[g(z)]
\end{align*}
following the same steps in reverse. 
As such, $P(x^{(n)}, \cdot) \Rightarrow P(x,\cdot)$ as $x_n \to x$ and (\ref{eqn:S.MWU}) forms a Feller chain. 
\end{proof}

\section{Stationary Distributions in the Primal-space}\label{app:stationary}

\Stationary*

\begin{proof}
	
	$(\Leftarrow)$ Let $\pi$ be such that $\pi(x)>0$ only if $x\in {\cal X}$ is a pure strategy. 
	This implies that if $\pi(x)>0$, then $x$ is in the form $x=(e_{s_1}, e_{s_2},\dots, e_{s_N})$ for some $s\in {\cal S}$ and the strategy $s$ is selected from the distribution $x$ with probability $1$ and updating $x$ with (\ref{eqn:S.MWU}) yields the strategy $x$ and therefore $P(x, X^1=x)=1$. 
	This hold for all $x$ where $\pi(x)>0$ and $\pi P = \pi$ implying $\pi$ is a stationary distribution. 
	
	$(\Rightarrow)$ 
	Let $\pi$ be any stationary measure and let $\bar{\cal X}$ be a face of ${\cal X}$ with support $\bar{\cal S}\subseteq {\cal S}$, i.e., $\bar{\cal X}=\{x\in {\cal X}: x_{i\bar{s}_i}>0\  \forall \bar{s}_i\in \bar{\cal S}_i \ \forall i=1,...,N \}$. 
	With this definition, ${\cal X}$ can be expressed as a union of the relative interior of its faces,
	i.e.,
	\begin{align*}
	{\cal X}&= \bigcup_{\bar{\cal S}\subseteq {\cal S}}\left\{ rel.int(\bar{\cal X}) \right\}\\
	&= \left\{ \cup_{s\in {\cal S}} (e_{s_1},...,e_{s_n}) \right\}\bigcup
	\left\{\cup_{\bar{\cal S}\subseteq {\cal S}: \sum_{i=1}^N|\bar{\cal S}_i|\geq N+1 \ \forall i} rel.int(\bar{\cal X}) \right\}
	\end{align*}
	where the equality follows after separating the 0-dimensional faces that correspond to pure strategies. 
	We now show that $\pi(rel.int(\bar{\cal X}))=0$ if $\bar{\cal X}$ is not a pure strategy.

	By definition of (\ref{eqn:S.MWU}), $P:  rel.int(\bar{\cal X}) \to  rel.int(\bar{\cal X})$ since $X_i^{t}>0$ if and only if $X_i^{t-1}>0$.
	Thus, if $\pi(rel.int(\bar{\cal X}))>0$, then $\pi(rel.int(\bar{\cal X}))>0$ induces an invariant function (not necessarily a measure) on $rel.int(\bar{\cal X})$.
	
	Consider the game induced on the face $\bar{\cal S}$, $\bar{\Gamma}=\Gamma (\cal N, \bar{\cal S}, A)$.  
	Since the elements of $A$ are distinct, $\bar{\Gamma}$ is non-trivial whenever ${\cal X}$ is not 0-dimensional, i.e., when $\bar{\cal X}$ is not a single pure strategy. 
	We now break the problem into two cases depending on whether $\bar{\Gamma}$ has an interior Nash equilibrium.

	First, suppose $\bar{\Gamma}$ has a Nash equilibrium $\bar{x}^*\in rel.int(\bar{\cal X})$. 
	We will now use Theorem \ref{thm:Drift} to show strategies drift away from $x^*$ implying there is no stationary probability in $rel.int(\bar{\cal X})$.
	Let $\bar{h}_i=\sum_{s_i\in \bar{\cal S}_i} x_{is_i}\ln x_{is_i}$.
	This definition produces the same update rule for $x\in \bar{\cal X}$ using (\ref{eqn:S.MWU}) and the same probability transition kernel on $\bar{\cal X}$. 
	More importantly, it allows us to use the definition (\ref{eqn:S.FTRL}) and apply Theorem \ref{thm:Drift}. 
	Let $\bar{Y^t}$ and $\bar{h}^*$ be the resulting updates from (\ref{eqn:S.FTRL}) on $\bar{\Gamma}$.
	In the proof of Theorem \ref{thm:Drift}, $\bar{h}^*_i$ is strictly convex for all $y$ and therefore $E[F_{\bar{h}}(\bar{x}^*||\bar{Y}^t| \bar{Y}^{t-1}=\bar{y}^{t-1})]> F_{\bar{h}}(\bar{x}^*||\bar{y}^{t-1})$, i.e., the Bregman divergence between $\bar{X}^t$ and $\bar{x}^*$ is expected to increase. 
	Since $\pi$ is an invariant function on $rel.int(\bar{\cal X})$, the expected value of the Bregman divergence, with respect to $\pi$, does not change and therefore $\pi(rel.int(\bar{\cal X}))=0$.

	Next, suppose $\bar{\Gamma}$ does not have an interior Nash equilibrium. 
	By Theorem \ref{thm:noninterior}, strategies converge to the boundary and therefore $\lim_{t\to \infty} (\pi P^t)(B)\to 0$ for all compact $B\subseteq rel.int(\bar{\cal X})$.
	Since $\pi$ is an invariant function,  $\lim_{t\to \infty} (\pi P^t)(B)=\pi(B)$ and therefore $\pi(B)=0$. 
	Thus, $\pi (rel.int(\bar{\cal X}))=0$. 
	
	As a result, $\pi(rel.int{\bar X})=0$ whenever ${\bar X}$ is not a pure strategy completing the second direction thereby completing the proof the theorem. 
\end{proof}

\section{Agents Play Pure Strategies}\label{app:convergence}

\Convergence*

Theorem \ref{thm:Convergence} follows from Theorems \ref{thm:Feller} and \ref{thm:Stationary} and the following result.

	\begin{lemma}[Theorem 12.3.2 in \cite{douc2018markov}]\label{lem:Fel_conv}
		Let $X$ be a Feller chain taking values on the compact metric space $({\cal X}, \rho)$.
		For an arbitrary $\mu \in {\cal B}({\cal X})$, let 
		\begin{align*}\mu_T=\frac{1}{T}\sum_{t=0}^{T-1} (\mu P^t)(\cdot)
		\end{align*}
		be the time-average of the Feller process. 
		Then
		\begin{enumerate}
			\item The sequence $\{\mu_T\}_{T=1}^\infty$ has a convergent subsequence. 
			\item Every convergent subsequence of $\{\mu_T\}_{T=1}^\infty$ converges to a stationary distribution of $X$.  
		\end{enumerate}
	\end{lemma}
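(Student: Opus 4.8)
The plan is to prove this by the classical Krylov--Bogolyubov argument, combining compactness of the space of probability measures with the Feller property of $P$. Throughout I treat $\mu$ as a Borel probability measure on ${\cal X}$ and read ``convergent'' as weakly convergent (convergence in distribution, written $\Rightarrow$), consistent with the rest of the paper.

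For part~1, I would first observe that, since $({\cal X},\rho)$ is a compact metric space, the space $\mathcal{M}_1({\cal X})$ of Borel probability measures on ${\cal X}$ under the topology of weak convergence is compact and metrizable: by the Riesz representation theorem $\mathcal{M}_1({\cal X})$ is a weak-$*$ closed subset of the unit ball of $C({\cal X})^*$, which is weak-$*$ compact by Banach--Alaoglu and metrizable because $C({\cal X})$ is separable. Each $\mu P^t$ is a probability measure, and $\mu_T=\frac1T\sum_{t=0}^{T-1}\mu P^t$ is a convex combination of probability measures, hence itself a probability measure. Thus $\{\mu_T\}_{T\ge 1}$ is a sequence in the compact metrizable space $\mathcal{M}_1({\cal X})$ and therefore admits a weakly convergent subsequence, which establishes part~1.

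For part~2, suppose $\mu_{T_k}\Rightarrow\nu$ along a subsequence $T_k\to\infty$; I must show that $\nu$ is stationary, i.e.\ $\nu P=\nu$. The engine is the telescoping identity
\begin{align*}
\mu_T P-\mu_T=\frac1T\sum_{t=1}^{T}\mu P^{t}-\frac1T\sum_{t=0}^{T-1}\mu P^{t}=\frac1T\bigl(\mu P^{T}-\mu\bigr).
\end{align*}
Writing $(Pf)(x)=\int_{\cal X} f(y)\,P(x,dy)$ and using the adjoint relation $\int_{\cal X}(Pf)\,d\lambda=\int_{\cal X} f\,d(\lambda P)$, this identity yields, for every bounded measurable $f$,
\begin{align*}
\Bigl|\int_{\cal X}(Pf)\,d\mu_T-\int_{\cal X} f\,d\mu_T\Bigr|=\frac1T\Bigl|\int_{\cal X} f\,d(\mu P^{T})-\int_{\cal X} f\,d\mu\Bigr|\le\frac{2\lVert f\rVert_\infty}{T}\xrightarrow[T\to\infty]{}0.
\end{align*}

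The final step is where the Feller hypothesis is essential, and I expect it to be the main obstacle to make rigorous. For a continuous (hence bounded, as ${\cal X}$ is compact) test function $f\in C({\cal X})$, the Feller property guarantees $Pf\in C({\cal X})$: this is exactly the statement that $x^{(n)}\to x$ implies $P(x^{(n)},\cdot)\Rightarrow P(x,\cdot)$, i.e.\ $(Pf)(x^{(n)})\to(Pf)(x)$. Since both $f$ and $Pf$ are then continuous, weak convergence $\mu_{T_k}\Rightarrow\nu$ lets me pass to the limit in both integrals above:
\begin{align*}
\int_{\cal X}(Pf)\,d\nu-\int_{\cal X} f\,d\nu=\lim_{k\to\infty}\Bigl(\int_{\cal X}(Pf)\,d\mu_{T_k}-\int_{\cal X} f\,d\mu_{T_k}\Bigr)=0.
\end{align*}
Equivalently $\int_{\cal X} f\,d(\nu P)=\int_{\cal X} f\,d\nu$ for all $f\in C({\cal X})$, and because a finite Borel measure on a compact metric space is determined by its integrals against continuous functions (Riesz representation), this forces $\nu P=\nu$. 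Hence every subsequential weak limit is a stationary distribution, proving part~2. The only non-routine ingredient is the Feller continuity of $Pf$; without it the map $f\mapsto Pf$ need not preserve $C({\cal X})$ and the passage to the weak limit in the last display would break down.
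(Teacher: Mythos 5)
Your proposal is correct, and it matches the paper's treatment: the paper does not prove this lemma at all but imports it verbatim as Theorem 12.3.2 of \cite{douc2018markov}, and your Krylov--Bogolyubov argument (compactness and metrizability of the space of Borel probability measures on the compact ${\cal X}$ via Banach--Alaoglu, the telescoping identity $\mu_T P-\mu_T=\frac{1}{T}(\mu P^T-\mu)$, and the Feller property guaranteeing $Pf\in C({\cal X})$ so that both integrals pass to the weak limit, with Riesz giving $\nu P=\nu$) is precisely the standard proof underlying that cited theorem. You also correctly isolate the Feller hypothesis as the one indispensable ingredient, so nothing is missing.
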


\begin{proof}[Proof of Theorem \ref{thm:Convergence}]
By Theorem \ref{thm:Feller}, $\{X^t\}_{t=1}^\infty=\{X^0P^t\}_{t=1}^\infty$ corresponds to a Feller chain.
By Lemma \ref{lem:Fel_conv}, $\{\bar X^t\}_{t=1}^\infty$ must have a convergent subsequence. 
Moreover, every convergent subsequence converges to a stationary distribution. 
By Theorem \ref{thm:Stationary}, a distribution is stationary if and only if it is a mixture over the set of the pure strategies thereby completing the proof of the theorem. 
\end{proof}
\end{document}